\newtheorem{theorem}{Theorem}
\newtheorem{lemma}{Lemma}
\newtheorem{assumption}{Assumption}
\newtheorem{remark}{Remark}
\newtheorem{definition}{Definition}
\newtheorem{problem}{Problem}
\newcommand{\Null}{\mathrm{Null}}
\newcommand{\one}{\mathbf{1}}
\newcommand{\rank}{\mathrm{rank}}
\newcommand{\myspan}{\mathrm{span}}
\newcommand{\D}{\mathrm{d}}
\newcommand{\sign}{\mathrm{sign}}
\newcommand{\R}{\mathbb{R}}
\newcommand{\G}{\mathcal{G}}
\newcommand{\E}{\mathcal{E}}
\newcommand{\V}{\mathcal{V}}
\newcommand{\N}{\mathcal{N}}
\newcommand{\sat}{\mathrm{sat}}
\renewcommand{\Re}{\mathrm{Re}} 
\renewcommand{\L}{\mathcal{B}}
\definecolor{myColor}{RGB}{0,0,255}
\begin{document}

\title{Translational and Scaling Formation Maneuver Control via a Bearing-Based Approach}
\author{Shiyu Zhao and Daniel Zelazo
\thanks{S. Zhao is with the Department of Mechanical Engineering, University of California, Riverside and D. Zelazo is with the Faculty of Aerospace Engineering, Technion - Israel Institute of Technology, Haifa, Israel.
    {\tt\small shiyuzhao@engr.ucr.edu, dzelazo@technion.ac.il}}
}

\maketitle
\begin{abstract}
This paper studies distributed maneuver control of multi-agent formations in arbitrary dimensions.
The objective is to control the translation and scale of the formation while maintaining the desired formation pattern.
Unlike conventional approaches where the target formation is defined by relative positions or distances, we propose a novel bearing-based approach where the target formation is defined by inter-neighbor bearings.
Since the bearings are invariant to the translation and scale of the formation, the bearing-based approach provides a simple solution to the problem of translational and scaling formation maneuver control.
Linear formation control laws for double-integrator dynamics are proposed and the global formation stability is analyzed.
This paper also studies bearing-based formation control in the presence of practical problems including input disturbances, acceleration saturation, and collision avoidance.
The theoretical results are illustrated with numerical simulations.
\end{abstract}

\section{Introduction}

Existing approaches to multi-agent formation control can be categorized by how the desired geometric pattern of the target formation is defined.
In two popular approaches, the target formation is defined by inter-neighbor relative positions or distances (see \cite{Oh2015Automatica} for an overview).
It is notable that the invariance of the constraints of the target formation has an important impact on the formation maneuverability.
For example, since the relative-position constraints are invariant to the translation of the formation, the relative-position-based approach can be applied to realize translational formation maneuvers (see, for example, \cite{RenWei2007SCL}).
Since distance constraints are invariant to both translation and rotation of the formation, the distance-based approach can be applied to realize translational and rotational formation maneuvers (see, for example, \cite{SunZhiyongMSC2015}).

In addition to the above two approaches, there has been a growing research interest in a bearing-based formation control approach in recent years
\cite{bishopconf2011rigid,Eren2012IJC,zhao2015ECC,zelazo2014SE2Rigidity}.
In the bearing-based approach, the geometric pattern of the target formation is defined by inter-neighbor bearings.
Since the bearings are invariant to the translation and scale of the formation, the bearing-based approach provides a simple solution to the problem of translational and scaling formation maneuver control.
Translational maneuvers refer to when the agents move at a common velocity such that the formation translates as a rigid body.
Scaling maneuvers  refer to when the formation scale, which is defined as the average distance from the agents to the formation centroid, varies while the geometric pattern of the formation is preserved.
It is worth mentioning that the bearing-based formation control studied in this paper requires relative-position or velocity measurements, which differs from the bearing-only formation control problem where the feedback control relies only on bearing measurements \cite{nima2009TR,TronVisionFormation,bishop2010SCL,Franchi2012IJRR,Cornejo2013IJRR,ZhengRonghao2013SCL,zhao2013SCLDistribued,Eric2014ACC,zhao2014TACBearing}.
Moreover, bearing-based formation control is a linear control problem whereas bearing-only formation control is nonlinear.

Formation scale control is a useful technique in practical formation control tasks. By adjusting the scale of a formation, a team of agents can dynamically respond to their surrounding environment to, for example, avoid obstacles.
The problem of formation scale control has been studied by the relative-position and distance-based approaches in \cite{Coogan2012Scale,Park2014IJRNC}.
However, since neither the relative positions nor distances are invariant to the formation scale, these two approaches result in complicated estimation and control schemes in which follower agents must estimate the desired formation scale known only by leader agents.
Moreover, the two approaches are so far only applicable in the case where the desired formation scale is constant.
Very recently, the work \cite{LinZhiyun2014TAC} proposed a formation control approach based on the complex Laplacian matrix.
In this approach, the target formation is defined by complex linear constraints that are invariant to the translation, rotation, and scale of the formation.
As a result, this approach provides a simple solution to formation scale control. However, as shown in \cite{LinZhiyun2014TAC}, the approach is only applicable to formation control in the plane; it is unclear if it can be extended to higher dimensions.

Although the bearing-based approach provides a simple solution to formation scale control, the existing studies on bearing-based formation control focus mainly on the case of {static} target formations. The case where the translation and scale of the target formation are time-varying has not yet been studied.
Moreover, a fundamental problem, which has not been solved in the existing literature, is when the target formation can be \emph{uniquely} determined by the inter-neighbor bearings and leaders in arbitrary dimensional spaces.
The analysis of this fundamental problem requires the bearing rigidity theory proposed in \cite{zhao2014TACBearing} and was addressed in our recent work in \cite{zhao2015NetLocalization}.
Our previous work \cite{zhao2015MSC} considered a single-integrator dynamic model of the agents and proposed a proportional-integral bearing-based formation maneuver control law.

The contributions of this paper are summarized as below.
Firstly, we study the problem when a target formation can be uniquely determined by inter-neighbor bearings and leader agents.
The necessary and sufficient condition for uniqueness of the target formation is analyzed based on a special matrix we term the \emph{bearing Laplacian}, which characterizes both the interconnection topology and the inter-neighbor bearings of the formation.
Secondly, we propose two linear bearing-based formation control laws for double-integrator dynamics. With these two control laws, the formation can track constant or time-varying leader velocities.
In the proposed control laws, the desired translational and scaling maneuver is only known to the leaders and the followers are not required to estimate it.
A global formation stability analysis is presented for each of the control laws.
Thirdly, we study bearing-based formation control in the presence of some practical issues.
In particular, control laws that can handle constant input disturbances and acceleration saturation are proposed and their global stability is analyzed.
Sufficient conditions that ensure no collision between any two agents are also proposed.
Finally, it is noteworthy that the results presented in this paper are applicable to formation control in arbitrary dimensions.

The organization of this paper is as follows.
Section~\ref{section_problemStatement} presents the problem formulation.
Section~\ref{section_controlLaws} proposes and analyzes two linear bearing-based formation control laws.
Section~\ref{section_practicalIssues} considers bearing-based formation control in the presence of practical issues such as input disturbances and acceleration saturation.
Conclusions are drawn in Section~\ref{section_conclusion}.

\section{The Formation Maneuver Control Problem\\ and Bearing-Constrained Target Formations}\label{section_problemStatement}

Consider a formation of $n$ agents in $\R^d$ ($n\ge2$, $d\ge2$).
Let $\V\triangleq\{1,\dots,n\}$.
Denote $p_i(t)\in\R^d$ and $v_i(t) \in \R^d$ as the position and velocity of agent $i\in\V$.
Let the first $n_\ell$ agents be termed the \emph{leaders} and the remaining $n_f$ agents the \emph{followers} ($n_\ell+n_f=n$).
Let $\V_\ell=\{1,\dots,n_\ell\}$ and $\V_f=\{n_\ell+1,\dots,n\}$ be the index sets of the leaders and followers, respectively.
The motion (i.e., position and velocity) of each leader is {given} \emph{a priori}, and we assume the velocity of each leader is piecewise continuously differentiable.
Each follower is modeled as a double-integrator,
\begin{align*}
\dot{p}_i(t)=v_i(t),\quad \dot{v}_i(t)=u_i(t), \quad i\in\V_f,
\end{align*}
where $u_i(t)\in\R^d$ is the acceleration input to be designed.
Let $p_\ell=[p_1^T ,\dots, p_{n_\ell}^T ]^T$, $p_f=[p_{n_\ell+1}^T  ,\dots, p_{n}^T ]^T $, $v_\ell=[v_1^T  ,\dots, v_{n_\ell}^T ]^T $, and $v_f=[v_{n_\ell+1}^T  ,\dots, v_n^T]^T $.
Let $p=[p_\ell^T,p_f^T]^T$ and $v=[v_\ell^T,v_f^T]^T$.

The underlying information flow among the agents is described by a fixed graph $\G=(\V,\E)$ where $\E\subset\V\times \V$ is the edge set.
By mapping the point $p_i$ to the vertex $i$, we denote the formation as $\G(p)$.
If $(i,j)\in\E$, agent $i$ can access to the information of agent $j$.
The set of neighbors of agent $i$ is denoted as $\mathcal{N}_i\triangleq\{j \in \mathcal{V}: (i,j)\in \mathcal{E}\}$.
We assume that the information flow between any two followers is \emph{bidirectional}.
The bearing of agent $j$ relative to agent $i$ is described by the unit vector
\begin{align*}
g_{ij}\triangleq \frac{p_j-p_i}{\|p_j-p_i\|}.
\end{align*}
Note $g_{ji}=-g_{ij}$.
For $g_{ij}$, define
\begin{align*}
    P_{g_{ij}} \triangleq I_d - g_{ij}g_{ij}^T,
\end{align*}
where $I_d\in\R^{d\times d}$ is the identity matrix.
Note that $P_{g_{ij}}$ is an orthogonal projection matrix that geometrically projects any vector onto the orthogonal compliment of $P_{g_{ij}}$.
It can be verified that $P_{g_{ij}}$ is positive semi-definite and satisfies $P_{g_{ij}}^T =P_{g_{ij}}$, $P_{g_{ij}}^2=P_{g_{ij}}$, and $\Null(P_{g_{ij}})=\myspan\{g_{ij}\}$.

\subsection{Bearing-Based Formation Maneuver Control}

Suppose the real bearings of the formation at time $t>0$ are $\{g_{ij}(t)\}_{(i,j)\in\E}$, and the desired constant bearings are $\{g_{ij}^*\}_{(i,j)\in\E}$.
The bearing-based formation control problem is formally stated below.

\begin{problem}[{Bearing-Based Formation Maneuver Control}]\label{problem_bearingFormationManeuver}
Consider a formation $\G(p(t))$ where the (time-varying) position and velocity of the leaders, $\{p_i(t)\}_{i\in\V_\ell}$ and $\{v_i(t)\}_{i\in\V_\ell}$, are given.
Design the acceleration control input $u_i(t)$ for each follower $i\in\V_f$ based on the relative position $\{p_i(t)-p_j(t)\}_{j\in\N_i}$ and the relative velocity $\{v_i(t)-v_j(t)\}_{j\in\N_i}$ such that $g_{ij}(t)\rightarrow g_{ij}^*$ for all $(i,j)\in\E$ as $t\rightarrow \infty$.
\end{problem}

Problem~\ref{problem_bearingFormationManeuver} can be equivalently stated as a problem where the formation is required to converge to a bearing-constrained target formation as defined below.

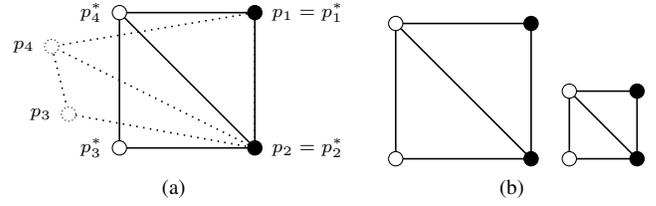
\begin{figure}
  \centering
  \def\myscale{0.45}
\subfloat[]{
\begin{tikzpicture}[scale=\myscale]
            \coordinate (x1) at (0,0);
            \coordinate (x2) at (0,-4);
            \coordinate (x3_star) at (-4,-4);
            \coordinate (x4_star) at (-4,0);
            \coordinate (x3) at (-5.5,-3);
            \coordinate (x4) at (-6,-1);
            \draw [semithick] (x1)--(x2)--(x3_star)--(x4_star)--cycle;
            \draw [semithick] (x2)--(x4_star);
            \draw [semithick,dotted] (x1)--(x2)--(x3)--(x4)--cycle;
            \draw [semithick,dotted] (x2)--(x4);
            \def\radius{6pt}
            \draw [fill=black](x1) circle [radius=\radius];
            \draw [fill=black](x2) circle [radius=\radius];
            \draw [fill=white](x3_star) circle [radius=\radius];
            \draw [fill=white](x4_star) circle [radius=\radius];
            \draw [fill=white,densely dotted](x3) circle [radius=\radius];
            \draw [fill=white,densely dotted](x4) circle [radius=\radius];
            \draw (x1) node[right=\radius/2] {\scriptsize{$p_1=p_1^*$}};
            \draw (x2) node[right=\radius/2] {\scriptsize$p_2=p_2^*$};
            \draw (x3_star) node[left=\radius/2] {\scriptsize$p_3^*$};
            \draw (x4_star) node[left=\radius/2] {\scriptsize$p_4^*$};
            \draw (x3) node[left=\radius/2] {\scriptsize$p_3$};
            \draw (x4) node[left=\radius/2] {\scriptsize$p_4$};
\end{tikzpicture}}
\quad
\subfloat[]{
\begin{tikzpicture}[scale=\myscale]
\def\r{4}
            \coordinate (x1) at (0,0);
            \coordinate (x2) at (0,-\r);
            \coordinate (x3) at (-\r,-\r);
            \coordinate (x4) at (-\r,0);
            \draw [semithick] (x1)--(x2)--(x3)--(x4)--cycle;
            \draw [semithick] (x2)--(x4);
            \def\radius{6pt}
            \draw [fill=black](x1) circle [radius=\radius];
            \draw [fill=black](x2) circle [radius=\radius];
            \draw [fill=white](x3) circle [radius=\radius];
            \draw [fill=white](x4) circle [radius=\radius];
\end{tikzpicture}
\,
\begin{tikzpicture}[scale=\myscale]
\def\r{2}
            \coordinate (x1) at (0,0);
            \coordinate (x2) at (0,-\r);
            \coordinate (x3) at (-\r,-\r);
            \coordinate (x4) at (-\r,0);
            \draw [semithick] (x1)--(x2)--(x3)--(x4)--cycle;
            \draw [semithick] (x2)--(x4);
            \def\radius{6pt}
            \draw [fill=black](x1) circle [radius=\radius];
            \draw [fill=black](x2) circle [radius=\radius];
            \draw [fill=white](x3) circle [radius=\radius];
            \draw [fill=white](x4) circle [radius=\radius];
\end{tikzpicture}} 
  \caption{An illustration of the bearing-constrained target formation. Solid dots: leaders; hollow dots: followers. Figure (a) shows the target formation $p^*$ and the real formation $p$. Figure (b) shows two target formations that have the same bearings but different translations and scales.}
  \label{fig_illustration_targetFormation}
\end{figure}

\begin{definition}[Target Formation]\label{definition_targetFormation}
The \emph{target formation} denoted by $\G(p^*(t))$ is a formation that satisfies the following constraints for all $t\ge0$:
\begin{enumerate}[(a)]
\item \hspace{-5pt} Bearing: $(p_j^*(t)-p_i^*(t))/\|p_j^*(t)-p_i^*(t)\|=g_{ij}^*, \forall (i,j)\in\E$,
\item \hspace{-5pt} Leader: $p_i^*(t)=p_i(t), \forall i\in\V_\ell$.
\end{enumerate}
\end{definition}

The target formation $\G(p^*(t))$ is constrained jointly by the bearing constraints and the leader positions.
The bearing constraints are constant, but the leader positions may be time-varying.
Given appropriate motion of the leaders, the target formation has the desired translational and scaling maneuver and desired inter-neighbor bearings.
If the real formation $p(t)$ converges to the target formation $p^*(t)$, the desired formation maneuver and formation pattern can be simultaneously achieved.
Motivated by this idea, define the position and velocity errors for the followers as
\begin{align}\label{eq_trackingErrorDef}
\delta_p(t) = p_f(t)-p_f^*(t),\quad
\delta_v(t) = v_f(t)-v_f^*(t),
\end{align}
where $p_f^*(t)$ and $v_f^*(t)$ are the position and velocity of the followers in the target formation.
The control objective is to design control laws for the followers to drive $\delta_p(t)\rightarrow 0$ and $\delta_v(t)\rightarrow0$ as $t\rightarrow\infty$ (see Figure~\ref{fig_illustration_targetFormation} for an illustration).
Note $\dot{\delta}_p(t)=\delta_v(t)$.

A fundamental problem regarding the target formation, which is still unexplored so far, is whether or not $p^*(t)$ exists and is unique.
If $p^*(t)$ is not unique, there exist multiple formations satisfying the bearing constraints and leader positions, and consequently the formation may not be able to converge to the desired geometric pattern.
This fundamental problem is analyzed in the following subsection.

\subsection{Properties of the Target Formation}

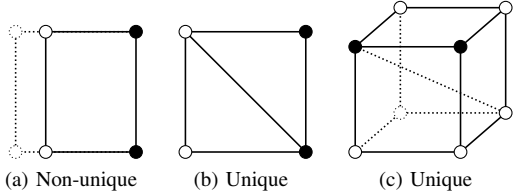
\begin{figure}[!t]
  \centering
  \def\myscale{0.4}
\subfloat[Non-unique]{
\begin{tikzpicture}[scale=\myscale]
            \coordinate (x1) at (0,0);
            \coordinate (x2) at (0,-4);
            \coordinate (x3) at (-3,-4);
            \coordinate (x4) at (-3,0);
            \coordinate (x5) at (-4,0);
            \coordinate (x6) at (-4,-4);
            \draw [semithick] (x1)--(x2)--(x3)--(x4)--cycle;
            \draw [semithick,densely dotted] (x1)--(x5)--(x6)--(x2);
            \def\radius{6pt}
            \draw [fill=black](x1) circle [radius=\radius];
            \draw [fill=black](x2) circle [radius=\radius];
            \draw [fill=white](x3) circle [radius=\radius];
            \draw [fill=white](x4) circle [radius=\radius];
\draw [fill=white,densely dotted](x5) circle [radius=\radius];
\draw [fill=white,densely dotted](x6) circle [radius=\radius];
\end{tikzpicture}}
\quad
\subfloat[Unique]{
\begin{tikzpicture}[scale=\myscale]
            \coordinate (x1) at (0,0);
            \coordinate (x2) at (0,-4);
            \coordinate (x3) at (-4,-4);
            \coordinate (x4) at (-4,0);
            \draw [semithick] (x1)--(x2)--(x3)--(x4)--cycle;
            \draw [semithick] (x2)--(x4);
            \def\radius{6pt}
            \draw [fill=black](x1) circle [radius=\radius];
            \draw [fill=black](x2) circle [radius=\radius];
            \draw [fill=white](x3) circle [radius=\radius];
            \draw [fill=white](x4) circle [radius=\radius];
\end{tikzpicture}}
\quad
\subfloat[Unique]{
\begin{tikzpicture}[scale=\myscale]
            \def\length{3.5}
            \coordinate (x1) at (0,0);
            \coordinate (x2) at (\length,0);
            \coordinate (x3) at (\length,-\length);
            \coordinate (x4) at (0,-\length);
            \def\Xoffset{1.5}
            \def\Yoffset{1.3}
            \coordinate (x5) at (0+\Xoffset,0+\Yoffset);
            \coordinate (x6) at (\length+\Xoffset,0+\Yoffset);
            \coordinate (x7) at (\length+\Xoffset,-\length+\Yoffset);
            \coordinate (x8) at (0+\Xoffset,-\length+\Yoffset);
            \draw [semithick] (x1)--(x2)--(x3)--(x4)--cycle;
            \draw [semithick] (x1)--(x5)--(x6)--(x2);
            \draw [semithick] (x6)--(x7)--(x3);
            \draw [semithick,densely dotted] (x5)--(x8);
            \draw [semithick,densely dotted] (x7)--(x8);
            \draw [semithick,densely dotted] (x4)--(x8);
            \draw [semithick,densely dotted] (x1)--(x7);
            \def\radius{6pt}
            \draw [fill=black](x1) circle [radius=\radius];
            \draw [fill=black](x2) circle [radius=\radius];
            \draw [fill=white](x3) circle [radius=\radius];
            \draw [fill=white](x4) circle [radius=\radius];
            \draw [fill=white](x5) circle [radius=\radius];
            \draw [fill=white](x6) circle [radius=\radius];
            \draw [fill=white](x7) circle [radius=\radius];
            \draw [fill=white,densely dotted](x8) circle [radius=\radius];
\end{tikzpicture}} 
  \caption{Examples of non-unique and unique target formations. Solid dots: leaders; hollow dots: followers. The formation in (c) is three-dimensional.}
  \label{fig_Example_Targetformation}
\end{figure}

This subsection explores the properties of the target formation that will be used throughout the paper.

\subsubsection{Bearing Laplacian Matrix}

Define a matrix $\L(G(p^*))\in\R^{dn\times dn}$ with the $ij$th block of submatrix as
\begin{align*}
[\L(G(p^*))]_{ij} = \left\{
  \begin{array}{ll}
      {\bf 0}_{d \times d}, &i \neq j, \, (i,j)\notin\E, \\
      -P_{g_{ij}^*}, & i \neq j, \, (i,j)\in\E, \\ 
      \sum_{k\in\N_i}P_{g_{ik}^*}, & i=j, \, i\in\V. \\
  \end{array}
\right.
\end{align*}
The matrix $\L(G(p^*))$, which we write in short as $\L$ in the sequel, can be viewed as a matrix-weighted graph Laplacian matrix, where the matrix weight for each edge is a positive semi-definite orthogonal projection matrix.
We call $\L$ the \emph{bearing Laplacian} since it characterizes both the interconnection topology and the bearings of the formation.
The bearing Laplacian matrix naturally emerges and plays important roles in bearing-based formation control and network localization problems \cite{zhao2015ECC,zhao2015MSC,zhao2015NetLocalization}.

We now state an important property of the bearing Laplacian.
In the following, $\one_n\in\R^n$ is the vector with all entries equal to one, and $\otimes$ denotes the Kronecker matrix product.

\begin{lemma}\label{lemma_NetworkLaplacian_nullspace}
For any $\G(p^*)$, the bearing Laplacian always satisfies
\begin{align}\label{eq_nullspaceBearingLaplacian}
\Null(\L)\supseteq\myspan\{\one_n\otimes I_d, p^*\}.
\end{align}
\end{lemma}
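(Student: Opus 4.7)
The plan is to verify directly that each claimed generator of the asserted subspace annihilates $\L$, using the block action of the bearing Laplacian and the null-space characterization of $P_{g_{ij}^*}$ already recorded in the excerpt.

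First I would rewrite $\L$ in a more convenient form. For any stacked vector $x=(x_1^T,\dots,x_n^T)^T\in\R^{dn}$ with $x_i\in\R^d$, the defining block structure of $\L$ immediately yields
\begin{align*}
(\L x)_i \;=\; \sum_{k\in\N_i} P_{g_{ik}^*}(x_i - x_k), \qquad i\in\V.
\end{align*}
This identity is the workhorse: both checks reduce to plugging in a specific $x$ and observing that each summand in each block is zero.

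Next I would handle the first generator. The columns of $\one_n\otimes I_d$ are exactly the vectors $\one_n\otimes e_j$ for $j=1,\dots,d$, where $e_j$ is the $j$th standard basis vector in $\R^d$. For such a vector the $i$th $d$-block equals $e_j$ for every $i$, so $x_i-x_k=0$ and the formula above gives $(\L x)_i=0$ for every $i$. Hence every column of $\one_n\otimes I_d$ lies in $\Null(\L)$.

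Then I would handle the second generator. Setting $x=p^*$, by the bearing constraint in Definition~\ref{definition_targetFormation}(a),
\begin{align*}
p_i^*-p_k^* \;=\; -\|p_k^*-p_i^*\|\,g_{ik}^* \;\in\; \myspan\{g_{ik}^*\} \;=\; \Null(P_{g_{ik}^*}),
\end{align*}
using the nullspace property of the projection stated just after the definition of $P_{g_{ij}}$. Therefore $P_{g_{ik}^*}(p_i^*-p_k^*)=0$ for each $k\in\N_i$ and every $i$, so $\L p^*=0$. Combining the two facts gives the containment in \eqref{eq_nullspaceBearingLaplacian}.

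There is no real obstacle here; the whole lemma is a direct computation, and the only thing to be careful about is bookkeeping the block indices and treating $\one_n\otimes I_d$ column-by-column (since it is a matrix, the statement ``$\Null(\L)\supseteq\myspan\{\one_n\otimes I_d,p^*\}$'' is shorthand for containing the span of its $d$ columns together with $p^*$). I would make that interpretation explicit at the start of the proof and then present the two one-line verifications above.
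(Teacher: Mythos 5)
Your proof is correct and follows essentially the same route as the paper's: both write the $i$th block of $\L x$ as $\sum_{k\in\N_i}P_{g_{ik}^*}(x_i-x_k)$ and then verify that this vanishes for (the columns of) $\one_n\otimes I_d$ because the blocks are equal, and for $p^*$ because $p_i^*-p_k^*\in\myspan\{g_{ik}^*\}=\Null(P_{g_{ik}^*})$ by the bearing constraint. Your explicit column-by-column reading of $\myspan\{\one_n\otimes I_d,p^*\}$ is a minor presentational refinement of the paper's ``$x_i=x_j$ for all $i,j$'' phrasing, not a different argument.
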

\begin{proof}
For any $x=[x_1^T,\dots,x_n^T]^T\in\R^{dn}$, we have
\begin{align}\label{eq_LxElement}
\L x=
\left[
  \begin{array}{c}
    \vdots \\
    \sum_{j\in\N_i} P_{g_{ij}^*}(x_i-x_j) \\
    \vdots \\
  \end{array}
\right].
\end{align}
Firstly, if $x\in\myspan\{\one_n\otimes I_d\}$, then $x_i=x_j$ for all $i,j\in\V$.
It then follows from \eqref{eq_LxElement} that $\L x=0$.
Secondly, if $x\in\myspan\{p^*\}$, then $x_i=kp_i^*$ for all $i\in\V$ where $k\in\R$.
It then follows from $P_{g_{ij}^*}(p_i^*-p_j^*)=0$ that $\L x=0$.
To sum up, any vector in $\myspan\{\one_n\otimes I_d, p^*\}$ is also in $\Null(\L)$.
\end{proof}
\begin{remark}
In fact, any vector in the null space of $\L$ corresponds to a motion of the formation that preserves all the bearings \cite{zhao2015NetLocalization}.
As a result, the expression in \eqref{eq_nullspaceBearingLaplacian} indicates that the bearings are invariant to the translational and scaling motion of the formation.
Specifically, $\one_n\otimes I_d$ corresponds to the translational motion and $p^*-\one_n\otimes (\sum_{i=1}^n p_i^*/n)$ corresponds to the scaling motion.
In addition, the bearings may also be invariant to other bearing-preserving motions (see, for example, Figure~\ref{fig_Example_Targetformation}(a)).
It is of great interest to understand when $\Null(\L)$ exactly equals $\myspan\{\one_n\otimes I_d, p^*\}$.
As shown in \cite{zhao2015NetLocalization}, when $\G$ is undirected, $\Null(\L)=\myspan\{\one_n\otimes I_d, p^*\}$ if and only if $\G(p)$ is infinitesimally bearing rigid.
The definition of the infinitesimal bearing rigidity and preliminaries to the bearing rigidity theory are given in the appendix.
\end{remark}

We continue with the analysis by partitioning $\L$ as
\begin{align*}
    \L=\left[
         \begin{array}{cc}
           \L_{\ell\ell} & \L_{\ell f} \\
           \L_{f\ell} & \L_{ff} \\
         \end{array}
       \right],
\end{align*}
where $\L_{\ell\ell}\in\R^{dn_\ell\times dn_\ell}$, $\L_{\ell f}\in\R^{dn_\ell\times d n_f}$ $\L_{f\ell}\in\R^{dn_f\times d n_\ell} $, and $\L_{ff}\in\R^{dn_f \times dn_f}$.
As will be shown later, the submatrix $\L_{ff}$ plays an important role in this work.

\begin{lemma}
The submatrix $\L_{ff}\in\R^{dn_f\times dn_f}$ is symmetric and positive semi-definite.
\end{lemma}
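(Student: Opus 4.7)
The plan is to verify both properties directly from the block definition of $\L$, exploiting the assumption that the information flow between any two followers is bidirectional and the basic identity $P_{g_{ij}^*}=P_{g_{ji}^*}$ (which follows from $g_{ji}^*=-g_{ij}^*$ together with $P_g=I_d-gg^T$).

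For symmetry, I would argue blockwise. The diagonal blocks $[\L_{ff}]_{ii}=\sum_{k\in\N_i}P_{g_{ik}^*}$ are sums of the symmetric matrices $P_{g_{ik}^*}$, hence symmetric. For an off-diagonal block with $i,j\in\V_f$, $i\neq j$, the bidirectional assumption gives $(i,j)\in\E\Leftrightarrow (j,i)\in\E$, so $[\L_{ff}]_{ij}=-P_{g_{ij}^*}=-P_{g_{ji}^*}=[\L_{ff}]_{ji}$, which is $[\L_{ff}]_{ji}^T$ since each $P_{g}$ is itself symmetric. This establishes $\L_{ff}^T=\L_{ff}$.

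For positive semi-definiteness, I would evaluate the quadratic form $x_f^T\L_{ff}x_f$ for an arbitrary $x_f=[x_{n_\ell+1}^T,\dots,x_n^T]^T\in\R^{dn_f}$. Splitting $\N_i=\N_i^\ell\cup\N_i^f$ for each follower $i$, the diagonal contribution separates into a leader part $\sum_{i\in\V_f}\sum_{k\in\N_i^\ell}x_i^T P_{g_{ik}^*}x_i$, which is manifestly nonnegative because each $P_{g_{ik}^*}$ is PSD. The remaining diagonal terms together with the off-diagonal terms involve only follower-follower edges; here I would reindex the sum over unordered edges $\{i,j\}\subset\V_f$ with $(i,j)\in\E$ (legitimate by bidirectionality), whereupon each such edge contributes exactly
\begin{equation*}
x_i^T P_{g_{ij}^*}x_i + x_j^T P_{g_{ij}^*}x_j - 2x_i^T P_{g_{ij}^*}x_j = (x_i-x_j)^T P_{g_{ij}^*}(x_i-x_j)\ge0,
\end{equation*}
since $P_{g_{ij}^*}$ is PSD. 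Summing the leader-edge and follower-edge contributions gives $x_f^T\L_{ff}x_f\ge0$.

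The only subtle bookkeeping point, which is where I expect to be careful, is the accounting of the diagonal terms coming from follower-follower neighbors: each unordered edge $\{i,j\}$ contributes a $P_{g_{ij}^*}$ to the $i$th diagonal block and an equal $P_{g_{ji}^*}=P_{g_{ij}^*}$ to the $j$th diagonal block, which is precisely what is needed to complete the square against the two off-diagonal terms $-x_i^T P_{g_{ij}^*}x_j$ and $-x_j^T P_{g_{ji}^*}x_i$. The bidirectional assumption is essential at exactly this step; without it one cannot pair the off-diagonal contributions and the positive semi-definiteness argument breaks.
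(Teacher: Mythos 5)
Your proof is correct and takes essentially the same approach as the paper's: your split of each follower's neighbor set into leader and follower parts is exactly the paper's decomposition $\L_{ff}=\L_0+\mathcal{D}$ into the follower-subgraph bearing Laplacian plus a block-diagonal leader contribution, and your edge-wise completion of the square $(x_i-x_j)^T P_{g_{ij}^*}(x_i-x_j)\ge 0$ is the paper's quadratic form $\sum\|P_{g_{ij}^*}(x_i-x_j)\|^2\ge 0$ written without invoking $P_{g_{ij}^*}^2=P_{g_{ij}^*}=P_{g_{ij}^*}^T$. Your blockwise symmetry argument via $P_{g_{ij}^*}=P_{g_{ji}^*}$ and bidirectionality is simply a more explicit rendering of the paper's remark that $\L_0$ is symmetric because the follower edges are bidirectional.
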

\begin{proof}

The submatrix $\L_{ff}$ can be written as $\L_{ff}=\L_0+\mathcal{D}$ where $\L_0\in\R^{dn_f\times dn_f}$ is the bearing Laplacian for the subgraph of the followers and $\mathcal{D}\in\R^{dn_f\times dn_f}$ is a positive semi-definite block-diagonal matrix with $[\mathcal{D}]_{ii}=\sum_{j\in\V_\ell\cap\N_i}P_{g_{ij}^*}$ for $i\in\V_f$.
Note $\L_0$ is symmetric because the edges among the followers are assumed to be bidirectional.
For any $x=[x_1^T,\dots,x_{n_f}^T]^T\in\R^{dn_f}$, we have $x^T\L_0 x=\sum_{i\in\V_f}\sum_{j\in\V_f\cap\N_i}\|P_{g_{ij}^*}(x_i-x_j)\|^2\ge0$ and hence $\L_0$ is positive semi-definite.
Since $\mathcal{D}$ is also positive semi-definite, the matrix $\L_{ff}$ is positive semi-definite.
\end{proof}

\subsubsection{Uniqueness of the Target Formation}
Based on the bearing Laplacian, we can analyze the existence and uniqueness of the target formation $p^*$ (i.e., the existence and uniqueness of solutions to the equations in Definition~\ref{definition_targetFormation}).
The bearing constraints and leader positions are feasible if there exists at least one formation that satisfies them.
Feasible bearings and leader positions may be calculated from an arbitrary formation configuration that has the desired geometric pattern.
In general, given a set of feasible bearing constraints and leader positions, the target formation may \emph{not} be unique (see, for example, Figure~\ref{fig_Example_Targetformation}(a)).
In fact, the uniqueness problem of the target formation is identical to the localizability problem in bearing-only network localization \cite{zhao2015NetLocalization}.
We next give the necessary and sufficient condition for uniqueness of the target formation.

\begin{theorem}[Uniqueness of the Target Formation]\label{theorem_uniquenessOfTargetFormation}
Given feasible bearing constraints and leader positions, the target formation in Definition~\ref{definition_targetFormation} is unique if and only if $\L_{ff}$ is nonsingular.
When $\L_{ff}$ is nonsingular, the position and velocity of the followers in the target formation are uniquely determined as
\begin{align}\label{eq_pfstarExpression}
p_f^*(t)=-\L_{ff}^{-1}\L_{f\ell}p_\ell(t), \quad v_f^*(t)=-\L_{ff}^{-1}\L_{f\ell}v_\ell(t).
\end{align}
\end{theorem}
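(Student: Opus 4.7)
The plan is to translate the geometric conditions in Definition~\ref{definition_targetFormation} into the linear system $\L p^*(t) = 0$ and then exploit invertibility of $\L_{ff}$. I would first observe that the bearing condition $(p_j^*(t)-p_i^*(t))/\|p_j^*(t)-p_i^*(t)\| = g_{ij}^*$ on each edge $(i,j) \in \E$ is equivalent to the projection identity $P_{g_{ij}^*}(p_j^*(t)-p_i^*(t)) = 0$ supplemented by the strict-sign condition that $p_j^*(t)-p_i^*(t)$ be a positive multiple of $g_{ij}^*$. Aggregating the projection equations exactly as in the proof of Lemma~\ref{lemma_NetworkLaplacian_nullspace} gives $\L p^*(t) = 0$; partitioning $p^*(t) = [p_\ell^T(t),\, p_f^{*T}(t)]^T$ and invoking the leader condition $p_\ell^*(t) = p_\ell(t)$, the follower rows read
\begin{align*}
\L_{ff}\, p_f^*(t) + \L_{f\ell}\, p_\ell(t) = 0.
\end{align*}

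For the sufficiency direction, if $\L_{ff}$ is nonsingular this system has the unique solution $p_f^*(t) = -\L_{ff}^{-1}\L_{f\ell}\, p_\ell(t)$, and feasibility guarantees that this candidate is a genuine target formation rather than a spurious solution of the relaxed projection equation. The velocity formula $v_f^*(t) = -\L_{ff}^{-1}\L_{f\ell}\, v_\ell(t)$ then follows by differentiating in $t$, since the Laplacian blocks are constant and $v_\ell(t) = \dot p_\ell(t)$.

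The necessity direction is the subtlest and is where I expect to concentrate the effort. If $\L_{ff}$ is singular, I would pick a nonzero $z \in \Null(\L_{ff})$ and propose the perturbed candidate $p_f^{**}(t) = p_f^*(t) + \varepsilon z$ for sufficiently small $\varepsilon > 0$. To verify this is a second valid target formation, I would expand $z^T \L_{ff} z$ using the edgewise decomposition from the preceding lemma (leaning on the bidirectionality of follower-follower edges), so that $\L_{ff} z = 0$ forces each nonnegative summand $\|P_{g_{ij}^*}(z_i-z_j)\|^2$ and $\|P_{g_{ij}^*} z_i\|^2$ to vanish independently; hence $z_j-z_i$ is parallel to $g_{ij}^*$ on every edge incident to a follower, yielding $P_{g_{ij}^*}\bigl((p_j^{**}-p_i^{**})-(p_j^*-p_i^*)\bigr)=0$ edgewise. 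A short continuity argument then shows that for $\varepsilon$ small enough no edge direction flips and no inter-neighbor distance vanishes, so $p_f^{**}(t)$ satisfies Definition~\ref{definition_targetFormation} strictly and is distinct from $p_f^*(t)$. The delicate point is precisely this last continuity step, which separates the purely linear-algebraic condition $\L_{ff} z = 0$ from the sign/nonzero-distance requirement of the bearing definition and prevents a one-line conclusion from the rank-nullity theorem alone.
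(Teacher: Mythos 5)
Your proof is correct, but it takes a genuinely more self-contained route than the paper. The paper's own proof is four lines: it outsources the entire ``if and only if'' to the network-localization reference \cite{zhao2015NetLocalization}, and then derives \eqref{eq_pfstarExpression} exactly as in your sufficiency step --- Lemma~\ref{lemma_NetworkLaplacian_nullspace} gives $\L p^*=0$, the follower block reads $\L_{ff}p_f^*+\L_{f\ell}p_\ell=0$, invert, differentiate. You instead reprove both directions from first principles. Your necessity argument --- take $0\neq z\in\Null(\L_{ff})$, use the symmetry and positive semi-definiteness of $\L_{ff}$ (hence the bidirectionality of follower--follower edges) to split $z^T\L_{ff}z=0$ into vanishing edgewise projections, conclude $z_j-z_i\in\myspan\{g_{ij}^*\}$ on every edge, and perturb to $p_f^*+\varepsilon z$ --- is sound, and you are right that the continuity step carries the real content: singularity of $\L_{ff}$ only yields the linear relaxation $P_{g_{ij}^*}(\cdot)=0$, whereas Definition~\ref{definition_targetFormation} demands the sign-definite constraint, which your perturbation satisfies because $p_j^{**}-p_i^{**}=\left(\|p_j^*-p_i^*\|+\varepsilon c_{ij}\right)g_{ij}^*$ for scalars $c_{ij}$, a positive multiple of $g_{ij}^*$ for $\varepsilon$ small. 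Your explicit use of feasibility in the sufficiency direction (to rule out the relaxed linear system having a solution that is not a genuine target formation) is likewise a point the paper never articulates. What each approach buys: the paper's citation keeps the argument minimal and consistent with its localizability framework; yours makes the theorem self-contained and exposes precisely where the geometry exceeds the linear algebra. One piece of fine print you could add: since the definition imposes the constraints for all $t\ge0$, a single fixed $\varepsilon$ suffices only when $\inf_{t\ge0}\|p_i^*(t)-p_j^*(t)\|>0$ on each edge; otherwise let $\varepsilon(t)$ scale with the minimum edge length at time $t$. At the pointwise-in-$t$ level at which the paper and \cite{zhao2015NetLocalization} treat uniqueness, your argument is complete.
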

\begin{proof}
As shown in \cite{zhao2015NetLocalization}, the target formation is uniquely determined by the bearings and leader positions if and only if $\L_{ff}$ is nonsingular.
It follows from Lemma~\ref{lemma_NetworkLaplacian_nullspace} that $\L p^*=0$, which further implies $\L_{ff}p_f^*+\L_{f\ell}p_\ell=0$.
When $\L_{ff}$ is nonsingular, $p_f^*=-\L_{ff}^{-1}\L_{f\ell}p_\ell$.
Then, $v_f^*=\dot{p}_f^*=-\L_{ff}^{-1}\L_{f\ell}v_\ell$.
\end{proof}

A variety of other conditions for uniqueness of the target formation can be found in \cite{zhao2015NetLocalization}.
Here we highlight two useful conditions.
A useful necessary condition is that a unique target formation must have at least two leaders.
In this paper, we always assume there exist at least two leaders.
A useful sufficient condition is that the target formation is unique if it is infinitesimally bearing rigid and has at least two leaders.
By the sufficient condition, in order to design a unique target formation, we can first design an infinitesimally bearing rigid formation and then arbitrarily assign two agents as leaders.
Figure~\ref{fig_Example_Targetformation}(b)--(c) shows examples of unique target formations (more examples can be found in \cite{zhao2015NetLocalization}).
For the analysis in the sequel, we adopt the following uniqueness assumption.

\begin{assumption}
The target formation $\G(p^*(t))$ is unique for all $t\ge0$, which means $\L_{ff}$ is nonsingular.
\end{assumption}

\subsubsection{Target Formation Maneuvering}

In bearing-based formation maneuver control, the desired translational and scaling maneuver of the formation is known only to the leaders.
In order to achieve the desired maneuvers, the leaders must have appropriate motions.
We now study \emph{how} the leaders should move to achieve the desired maneuvers of the target formation.
Formation control laws will be designed later such that the real formation is steered to track the target formation.

To describe the translational and scaling maneuvers, we define the \emph{centroid}, $c(p^*(t))$, and the \emph{scale}, $s(p^*(t))$, for the target formation as
\begin{align*}
c(p^*(t))&\triangleq\frac{1}{n}\sum_{i\in\V}p_i^*(t)=\frac{1}{n}(\one_n\otimes I_d)^T {p}^*(t),\\
s(p^*(t))&\triangleq\sqrt{\frac{1}{n}\sum_{i\in\V}\|p_i^*(t)-c(p^*(t))\|^2}\\
&=\frac{1}{\sqrt{n}}\|p^*(t)-\one_n\otimes c(p^*(t))\|.
\end{align*}
The desired maneuvering dynamics of the centroid and scale of the target formation are given by
\begin{align}\label{eq_centroldScaleDesiredDynamics}
\dot{c}(p^*(t))=v_c(t), \quad
\dot{s}(p^*(t))=\alpha(t)s(p^*(t)),
\end{align}
where $v_c(t)\in\R^d$ denotes the desired velocity common to all agents and $\alpha(t)\in\R$ is the varying rate of the scale.
The formation scale expands when $\alpha(t)>0$ and contracts when $\alpha(t)<0$.
Suppose $v_c(t)$ and $\alpha(t)$ are known by the leaders.
We next show how the leaders should move to achieve the desired dynamics in \eqref{eq_centroldScaleDesiredDynamics}.

\begin{theorem}[Target Formation Maneuvering]\label{theorem_bothTransAndScale}
The desired dynamics of the centroid and the scale given in \eqref{eq_centroldScaleDesiredDynamics} are achieved if the velocities of the leaders have the form of
\begin{align}\label{eq_leaderVelocity}
v_i(t)=v_c(t)+\alpha(t)[p_i(t)- c(p^*(t))], \quad i\in\V_\ell.
\end{align}
\end{theorem}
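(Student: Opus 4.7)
The plan is to prove that the prescribed leader velocity rule extends automatically to \emph{all} agents in the target formation, i.e., that $v_i^*(t) = v_c(t) + \alpha(t)[p_i^*(t) - c(p^*(t))]$ holds for every $i\in\V$, not just the leaders. Once this common-form expression is established, the claimed centroid and scale dynamics fall out from straightforward differentiation.

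First I would introduce the candidate velocity field $w\in\R^{dn}$ defined block-wise by $w_i \triangleq v_c(t) + \alpha(t)[p_i^*(t) - c(p^*(t))]$ for all $i\in\V$. Rewriting this in stacked form gives
\begin{align*}
w = (\one_n\otimes I_d)\,v_c(t) + \alpha(t)\bigl(p^*(t) - (\one_n\otimes I_d)\,c(p^*(t))\bigr).
\end{align*}
Since each summand lies in $\myspan\{\one_n\otimes I_d, p^*\}$, Lemma~\ref{lemma_NetworkLaplacian_nullspace} yields $\L w = 0$. In particular, the follower block gives $\L_{ff}\, w_f + \L_{f\ell}\, w_\ell = 0$. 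Because the leader positions coincide with the target positions (Definition~\ref{definition_targetFormation}(b)), the assumed form \eqref{eq_leaderVelocity} implies $v_\ell = w_\ell$. Using nonsingularity of $\L_{ff}$ together with Theorem~\ref{theorem_uniquenessOfTargetFormation} then gives
\begin{align*}
v_f^* = -\L_{ff}^{-1}\L_{f\ell}\,v_\ell = -\L_{ff}^{-1}\L_{f\ell}\,w_\ell = w_f,
\end{align*}
so $v_i^*(t) = v_c(t) + \alpha(t)[p_i^*(t) - c(p^*(t))]$ holds for every agent.

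Finally I would verify the centroid and scale dynamics by direct differentiation. Averaging the common-form expression over $i\in\V$ gives $\dot c(p^*) = \frac{1}{n}\sum_i v_i^* = v_c + \alpha\bigl[c(p^*) - c(p^*)\bigr] = v_c$, which establishes the first equation in \eqref{eq_centroldScaleDesiredDynamics}. For the scale, differentiating $s^2(p^*) = \tfrac{1}{n}\sum_i \|p_i^* - c(p^*)\|^2$ yields
\begin{align*}
2s(p^*)\,\dot s(p^*) = \frac{2}{n}\sum_{i\in\V}(p_i^* - c(p^*))^T\bigl(v_i^* - \dot c(p^*)\bigr),
\end{align*}
and substituting $v_i^* - \dot c(p^*) = \alpha(p_i^* - c(p^*))$ collapses the sum to $2\alpha\,s^2(p^*)$, giving $\dot s(p^*) = \alpha\,s(p^*)$.

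The main obstacle is the middle step: one must recognize that the leader velocity prescription corresponds exactly to a vector in the span $\{\one_n\otimes I_d,\, p^*\}$, so that the null-space property of $\L$ combined with the nonsingularity of $\L_{ff}$ forces the followers' target velocities to share the same affine-in-position structure. The differentiation steps at the end are routine once this propagation through $\L$ is in hand.
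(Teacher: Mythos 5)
Your proposal is correct and follows essentially the same route as the paper's proof: it applies the null-space property of Lemma~\ref{lemma_NetworkLaplacian_nullspace} to the stacked vector $\one_n\otimes v_c+\alpha(p^*-\one_n\otimes c(p^*))$, extracts the follower block, and invokes the formula $v_f^*=-\L_{ff}^{-1}\L_{f\ell}v_\ell$ from Theorem~\ref{theorem_uniquenessOfTargetFormation} to propagate the affine velocity form to the followers before differentiating the centroid and scale. The only cosmetic difference is that you differentiate $s^2(p^*)$ rather than $s(p^*)$ directly, which neatly avoids dividing by $\|p^*-\one_n\otimes c(p^*)\|$ but changes nothing substantive.
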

\begin{proof}
The vector form of \eqref{eq_leaderVelocity} is $v_\ell(t)=\one_{n_\ell}\otimes v_c(t)+\alpha(t)[p_\ell(t)- \one_{n_\ell}\otimes c(p^*(t))]$.
Since $\myspan\{\one_n\otimes I_d, p^*\}\subseteq\Null(\L)$ as given in Lemma~\ref{lemma_NetworkLaplacian_nullspace}, we have
\begin{align*}
\L\left(\one_{n}\otimes v_c(t)+\alpha(t)(p^*(t)- \one_{n}\otimes c(p^*(t)))\right)=0.
\end{align*}
The above equation implies $\L_{f\ell}v_\ell+\L_{ff}[\one_{n_f}\otimes v_c(t)+\alpha(t)[p_f^*(t)- \one_{n_f}\otimes c(p^*(t)))]=0$.
Then $v_f^*(t)$ is calculated as
\begin{align*}
v_f^*(t)
&=\L_{ff}^{-1}\L_{f\ell}v_\ell(t)\\
&=\one_{n_f}\otimes v_c(t)+\alpha(t)[p_f^*(t)- \one_{n_f}\otimes c(p^*(t))],
\end{align*}
whose elementwise form is $v_i^*(t)=v_c(t)+\alpha(t)(p_i^*(t)-c(p^*(t)))$ for all $i\in\V_f$.
Note $\dot{p}^*=[v_{\ell}^T ,(v_f^*)^T ]^T =\one_n\otimes v_c(t)+\alpha(t)(p^*-\one_{n}\otimes c(p^*))$.
Substituting $\dot{p}^*$ into $\dot{c}(p^*)$ and $\dot{s}(p^*)$ gives
\begin{align*}
\dot{c}(p^*)&=\frac{1}{n}(\one_n\otimes I_d)^T \dot{p}^*\\
&=\frac{1}{n}(\one_n\otimes I_d)^T [\one_n\otimes v_c(t)+\alpha(t)(p^*-\one_{n}\otimes c(p^*))] \\
&=\frac{1}{n}(\one_n\otimes I_d)^T (\one_n\otimes v_c(t))
=v_c(t),
\end{align*}
and
\begin{align*}
\dot{s}(p^*)&=\frac{1}{\sqrt{n}}\frac{(p^*-\one_n\otimes c(p^*))^T }{\|p^*-\one_n\otimes
c(p^*)\|}(\dot{p}^*-\one_n\otimes v_c(t))\\
&=\frac{1}{\sqrt{n}}\frac{(p^*-\one_n\otimes c(p^*))^T }{\|p^*-\one_n\otimes c(p^*)\|}\alpha(t)(p^*-\one_{n}\otimes c(p^*))\\
&=\alpha(t) s(p^*).
\end{align*}
\end{proof}

As shown in \eqref{eq_leaderVelocity}, the velocity of each leader should be a linear combination of the common translational velocity and the velocity induced by the scaling variation.
In addition to $v_c(t)$ and $\alpha(t)$, each leader should also know the centroid $c(p^*(t))$, which is a global information of the target formation.
This quantity may be estimated in a distributed way using, for example, consensus filters, as described in \cite{Yang2010Automatica}.

\section{Bearing-Based Formation Control Laws}\label{section_controlLaws}
In this section, we propose two distributed control laws to steer the followers to track the maneuvering target formation.
The first control law requires relative position and velocity feedback; with this control law the formation tracks target formations with constant velocities.
The second control law requires position, velocity, and acceleration feedback; with this control law the formation tracks target formations with time-varying velocities.

\subsection{Formation Maneuvering with Constant Leader Velocity}

The bearing-based control law for follower $i\in\V_f$ is proposed as
\begin{align}\label{eq_controlLaw_ConstantVelocity}
u_i=-\sum_{j\in\N_i}P_{g_{ij}^*}\left[k_p(p_i-p_j)+k_v(v_i-v_j)\right],
\end{align}
where $P_{g_{ij}^*}=I_d-g_{ij}^*(g_{ij}^*)^T$ is a constant orthogonal projection matrix, and $k_p$ and $k_v$ are positive constant control gains.
Several remarks on the control law are given below.
Firstly, the neighbor $j\in\N_i$ of agent $i$ may be either a follower or a leader.
Secondly, the proposed control law has a clear geometric meaning illustrated in Figure~\ref{fig_controlLawGeometricMeaning}: the control term $P_{g_{ij}^*}(p_j-p_i)$ steers agent $i$ to a position where $g_{ij}$ is aligned with $g_{ij}^*$.
Thirdly, the proposed control law has a similar form as the second-order linear consensus protocols \cite{RenWei2008TAC,YuWenwu2010Automatica,MengZiyang2013SCL}.
The difference is that, in the consensus protocols the weight for each edge is a positive scalar, whereas in the proposed control law the weight for each edge is a positive semi-definite orthogonal projection matrix.
It is precisely the special properties of the projection matrices that allows the proposed control law to solve the bearing-based formation control problem.
The convergence of control law \eqref{eq_controlLaw_ConstantVelocity} is analyzed below.

\begin{figure}
  \centering
  \def\myscale{0.5}
  \tikzset{->-/.style={decoration={
  markings,
  mark=at position #1 with {\arrow{>}}},postaction={decorate}}
}

\begin{tikzpicture}[scale=\myscale]
            \coordinate (xi) at (0,0);
            \coordinate (xj) at (6,4);
            \coordinate (proj) at (0,4);
            \coordinate (projLeft) at (-0.5,4);
            \def\unit{4}
            \coordinate (gij)                 at (\unit,0);
            \draw [densely dotted,->-=0.55,>=latex, semithick] (xi)--(xj);
            \draw [densely dashed, semithick] (xj)--(projLeft);
            \draw[->, >=latex, thick] (xi) -- (gij) node [right] {\small{$g_{ij}^*$}};
            \draw[->, >=latex, very thick] (xi) -- (proj) node [below left] {\small{$P_{{g}_{ij}^*}({p}_j-{p}_i)$}};
            \def\radius{5pt}
            \draw [fill=white](xi) circle [radius=\radius];
            \draw [fill=white](xj) circle [radius=\radius];
            \draw (xi) node[left] {\small{${p}_i$}};
            \draw (xj) node[below right] {\small{${p}_j$}};
            \def\dis{0.5}
            \coordinate (p1) at (0,\dis);
            \coordinate (p2) at (\dis,\dis);
            \coordinate (p3) at (\dis,0);
            \draw [] (p1)--(p2)--(p3);
            \coordinate (p4) at (0,4-\dis);
            \coordinate (p5) at (\dis,4-\dis);
            \coordinate (p6) at (\dis,4);
            \draw [] (p4)--(p5)--(p6);
\end{tikzpicture}
  \caption{The geometric meaning of the term $P_{g_{ij}^*}(p_j-p_i)$ in control law \eqref{eq_controlLaw_ConstantVelocity}.}
  \label{fig_controlLawGeometricMeaning}
\end{figure}

\begin{theorem}\label{theorem_convergenceConstantVelocity}
Under control law \eqref{eq_controlLaw_ConstantVelocity}, when the leader velocity $v_\ell(t)$ is constant, the tracking errors $\delta_p(t)$ and $\delta_v(t)$ as defined in \eqref{eq_trackingErrorDef} globally and exponentially converge to zero.
\end{theorem}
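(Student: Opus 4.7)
The plan is to reduce the closed-loop dynamics to a linear time-invariant system in the error variables $(\delta_p,\delta_v)$ and then argue exponential stability via the spectrum of the closed-loop matrix using the positive definiteness of $\L_{ff}$.

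First, I would write the follower dynamics induced by \eqref{eq_controlLaw_ConstantVelocity} in compact form using the bearing Laplacian. Using the block structure of $\L$ and stacking the inputs, the aggregate acceleration of the followers becomes
\begin{align*}
\dot{v}_f = -k_p\bigl(\L_{f\ell}p_\ell + \L_{ff}p_f\bigr)-k_v\bigl(\L_{f\ell}v_\ell+\L_{ff}v_f\bigr).
\end{align*}
Next, I would use Theorem~\ref{theorem_uniquenessOfTargetFormation}, which gives $\L_{f\ell}p_\ell=-\L_{ff}p_f^*$ and $\L_{f\ell}v_\ell=-\L_{ff}v_f^*$. Substituting and using $\delta_p=p_f-p_f^*$, $\delta_v=v_f-v_f^*$ collapses the two parentheses to $\L_{ff}\delta_p$ and $\L_{ff}\delta_v$ respectively. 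Because the hypothesis says $v_\ell$ is constant, the formula $v_f^*=-\L_{ff}^{-1}\L_{f\ell}v_\ell$ from Theorem~\ref{theorem_uniquenessOfTargetFormation} is also constant, so $\dot{v}_f^*=0$ and therefore $\dot{\delta}_v=\dot{v}_f$. Together with $\dot{\delta}_p=\delta_v$, I obtain the clean linear error system
\begin{align*}
\begin{bmatrix}\dot{\delta}_p\\ \dot{\delta}_v\end{bmatrix}
=\underbrace{\begin{bmatrix} 0 & I_{dn_f}\\ -k_p\L_{ff} & -k_v\L_{ff}\end{bmatrix}}_{A}\begin{bmatrix}\delta_p\\ \delta_v\end{bmatrix}.
\end{align*}

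The remaining step is to show $A$ is Hurwitz. Under Assumption~1, $\L_{ff}$ is symmetric and nonsingular, and by the preceding lemma it is positive semi-definite; hence it is positive definite. Orthogonally diagonalizing $\L_{ff}=Q\Lambda Q^T$ with $\Lambda=\mathrm{diag}(\mu_1,\dots,\mu_{dn_f})$, $\mu_i>0$, and applying the block permutation $[\delta_p^T,\delta_v^T]^T\mapsto[Q^T\delta_p,Q^T\delta_v]$ decomposes the system into $dn_f$ decoupled scalar second-order blocks governed by $\lambda^2+k_v\mu_i\lambda+k_p\mu_i=0$. Since $k_p,k_v,\mu_i$ are all strictly positive, the Routh--Hurwitz criterion yields $\Re(\lambda)<0$ for every eigenvalue, so $A$ is Hurwitz and the origin of the error system is globally exponentially stable. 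Alternatively, the quadratic Lyapunov candidate $V=\tfrac12(k_p\delta_p^T\L_{ff}\delta_p+\delta_v^T\delta_v)$ gives $\dot V=-k_v\delta_v^T\L_{ff}\delta_v$, which combined with linearity of the dynamics yields the same conclusion via standard LTI Lyapunov arguments.

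The only real obstacle is the algebraic reduction in the first paragraph: one must be careful to invoke $\L p^*=0$ from Lemma~\ref{lemma_NetworkLaplacian_nullspace} and the \emph{constancy} of $v_\ell$ to justify $\dot v_f^*=0$; once the dynamics are expressed as $\dot\delta_v=-k_p\L_{ff}\delta_p-k_v\L_{ff}\delta_v$, the spectral/Lyapunov conclusion is routine because $\L_{ff}$ inherits both symmetry and positive definiteness from the earlier structural results.
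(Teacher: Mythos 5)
Your proposal is correct and follows essentially the same route as the paper: the same reduction to the LTI error system $\dot{\delta}_p=\delta_v$, $\dot{\delta}_v=-k_p\L_{ff}\delta_p-k_v\L_{ff}\delta_v$ via the bearing Laplacian and the target-formation formula \eqref{eq_pfstarExpression}, followed by the same spectral argument that every eigenvalue solves $\lambda^2+k_v\mu\lambda+k_p\mu=0$ with $\mu>0$ an eigenvalue of $\L_{ff}$. Your explicit orthogonal diagonalization of $\L_{ff}$ merely makes rigorous the step the paper asserts directly, so the two proofs coincide in substance.
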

\begin{proof}
With control law \eqref{eq_controlLaw_ConstantVelocity}, the dynamics of the followers can be expressed in a matrix-vector form as
\begin{align}\label{eq_controlLaw_ConstantVelocity_matrixForm}
\dot{v}_f
&=-k_p(\L_{ff}p_f+\L_{f\ell}p_\ell)-k_v(\L_{ff}v_f+\L_{f\ell}v_\ell) \nonumber\\
&=-k_p\L_{ff}\delta_p-k_v\L_{ff}\delta_v,
\end{align}
where the second equality is due to the fact that $\delta_p=p_f+\L_{ff}^{-1}\L_{f\ell}p_\ell$ and $\delta_v=v_f+\L_{ff}^{-1}\L_{f\ell}v_\ell$ as shown in \eqref{eq_pfstarExpression}.
Substituting \eqref{eq_controlLaw_ConstantVelocity_matrixForm} into the error dynamics gives $\dot{\delta}_v=\dot{v}_f+\L_{ff}^{-1}\L_{f\ell}\dot{v}_\ell=-k_p\L_{ff}\delta_p-k_v\L_{ff}\delta_v+\L_{ff}^{-1}\L_{f\ell}\dot{v}_\ell$,
which can be rewritten in a compact form as
\begin{align}\label{eq_errorDynamics_constantVelocity_matrixForm}
\left[
  \begin{array}{c}
    \dot{\delta}_p \\
    \dot{\delta}_v \\
  \end{array}
\right]
=
\left[
  \begin{array}{cc}
    0 & I \\
    -k_p\L_{ff} & -k_v\L_{ff} \\
  \end{array}
\right]
\left[
  \begin{array}{c}
    \delta_p \\
    \delta_v \\
  \end{array}
\right]
+\left[
  \begin{array}{c}
    0 \\
    \L_{ff}^{-1}\L_{f\ell} \\
  \end{array}
\right]\dot{v}_\ell.
\end{align}
Let $\lambda$ be an eigenvalue of the state matrix of \eqref{eq_errorDynamics_constantVelocity_matrixForm}.
The characteristic equation of the state matrix is given by $\det(\lambda^2I+\lambda k_v\L_{ff}+k_p\L_{ff})=0$.
It can be calculated that $\lambda=(-k_v\mu\pm \sqrt{k_v^2\mu^2-4k_p\mu})/2$, where $\mu>0$ is an eigenvalue of $\L_{ff}$.
Therefore, $\Re(\lambda)<0$ for any $k_p, k_v, \mu>0$.
As a result, the state matrix is Hurwitz and hence $\delta_p$ and $\delta_v$ globally and exponentially converge to zero when $\dot{v}_\ell\equiv0$.
\end{proof}

\begin{figure}
  \centering
  \subfloat[Trajectory]{\includegraphics[width=\linewidth]{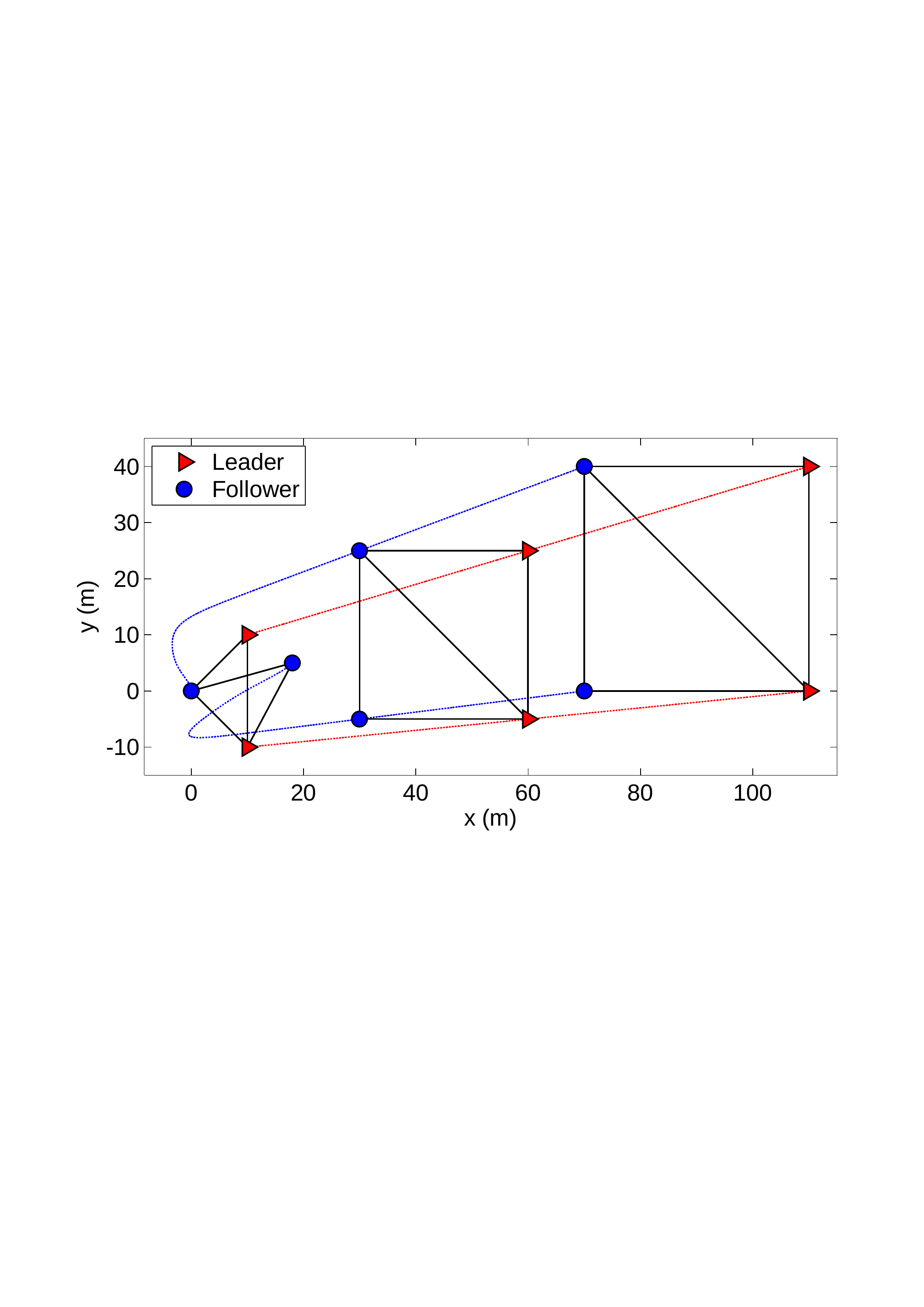}}\\
  \subfloat[Total bearing error: $\sum_{(i,j)\in\E}\|g_{ij}(t)-g_{ij}^*\|$]{\includegraphics[width=\linewidth]{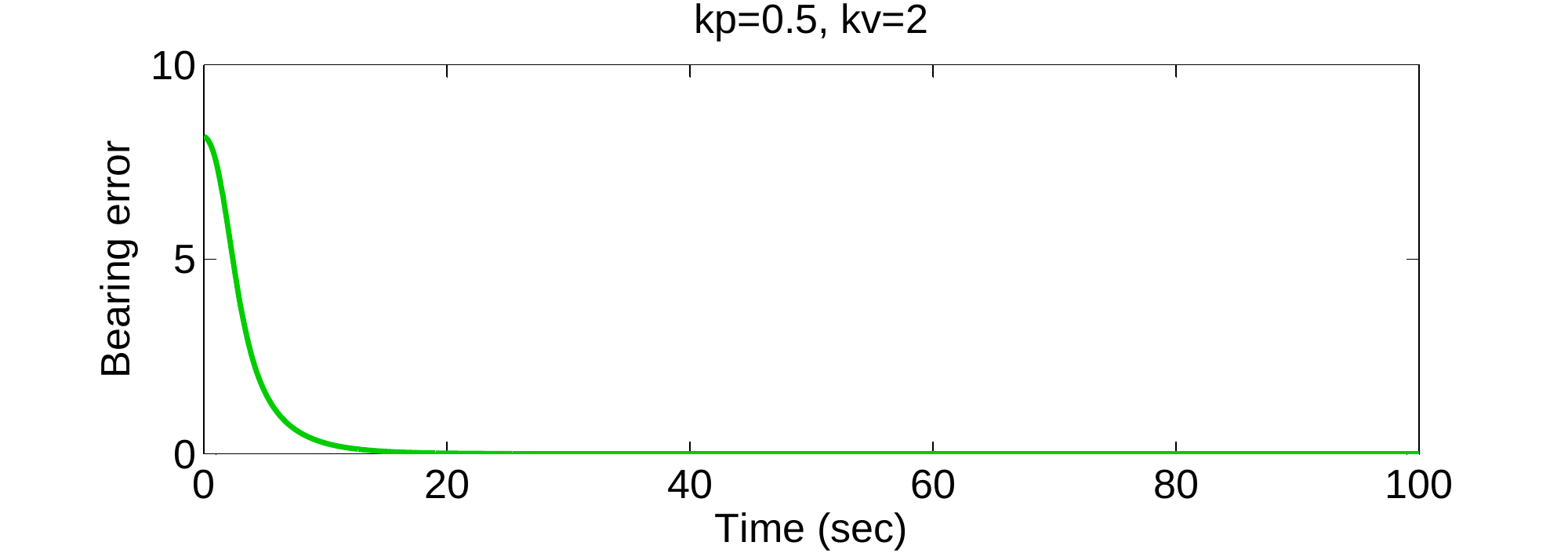}}\\
  \subfloat[Velocity]{\includegraphics[width=\linewidth]{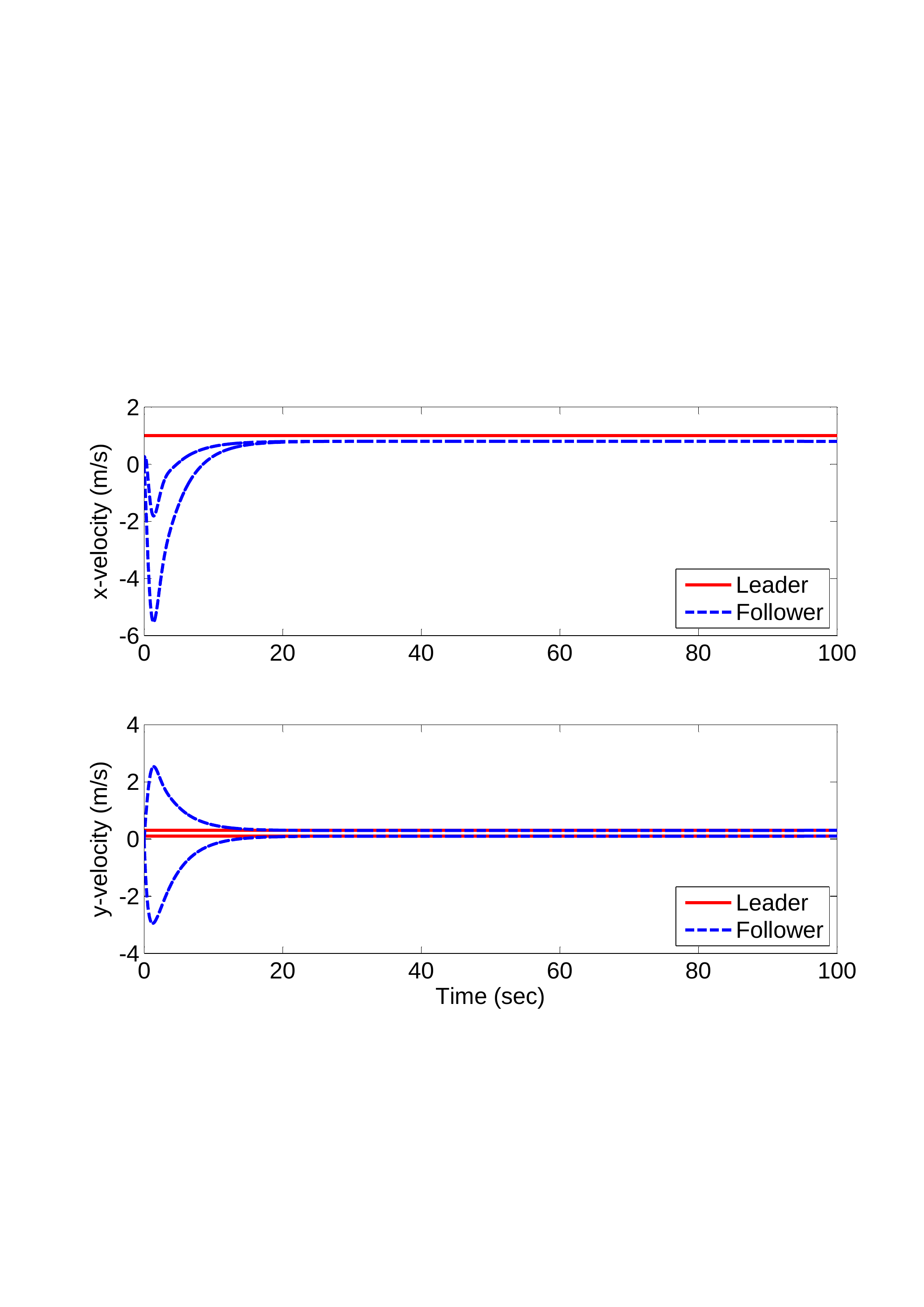}}
  \caption{A simulation example to demonstrate control law \eqref{eq_controlLaw_ConstantVelocity}.}
  \label{fig_sim_2DTransConstant}
\end{figure}

When $v_\ell(t)$ is time-varying (i.e., $\dot{v}_\ell(t)$ is not identically zero), the tracking errors may not converge to zero according to the error dynamics \eqref{eq_errorDynamics_constantVelocity_matrixForm}.
In order to perfectly track target formations with time-varying $v_\ell(t)$, additional acceleration feedback is required as shown in the next subsection.
In practical tasks where the desired target formation has piecewise constant velocities, the control law \eqref{eq_controlLaw_ConstantVelocity} may still give satisfactory performance.

A simulation example is given in Figure~\ref{fig_sim_2DTransConstant} to illustrate control law \eqref{eq_controlLaw_ConstantVelocity}.
The target formation in this example is the square shown in Figure~\ref{fig_Example_Targetformation}(b).
There are two leaders and two followers.
As shown in Figure~\ref{fig_sim_2DTransConstant}(a)--(b), the translation and scale of the formation are continuously varying and, in the meantime, the desired formation pattern is maintained.
In Figure~\ref{fig_sim_2DTransConstant}(c), the x-velocity of each follower converges to a value smaller than that of the leaders, because the velocity of a follower is a combination of the translational and scaling velocities and the scaling velocity in the x-direction is negative in this example.
\subsection{Formation Maneuvering with Time-Varying Leader Velocity}

Now consider the case where $v_\ell(t)$ is time-varying (i.e., $\dot{v}_\ell(t)$ is not identically zero).
Assume $\dot{v}_\ell(t)$ is piecewise continuous.
The following control law handles the time-varying case,
\begin{align}\label{eq_controlLaw_varyingVelocity}
\hspace{-5pt}u_i&=-K_i^{-1}\sum_{j\in\N_i} P_{g^*_{ij}}\left[k_p(p_i-p_j)+k_v(v_i-v_j)-\dot{v}_j\right],
\end{align}
where $K_i=\sum_{j\in\N_i}P_{g_{ij}^*}$.
Compared to control law \eqref{eq_controlLaw_ConstantVelocity}, control law \eqref{eq_controlLaw_varyingVelocity} requires the acceleration of each neighbor.
The design of control law \eqref{eq_controlLaw_varyingVelocity} is inspired by the consensus protocols for tracking time-varying references as proposed in \cite{RenWei2007SCL,RenWei2008TAC}.
The nonsingularity of $K_i$ is guaranteed by the uniqueness of the target formation as shown in the following result.
\begin{lemma}\label{lemma_nonsingularityOfKi}
The matrix $K_i$ is nonsingular for all $i\in\V_f$ if the target formation is unique.
\end{lemma}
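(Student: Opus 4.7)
The plan is to argue by contrapositive: I will assume some $K_i$ is singular for a follower $i \in \V_f$ and manufacture a nontrivial element of $\Null(\L_{ff})$, contradicting the uniqueness assumption. The key geometric observation is that each summand $P_{g_{ij}^*} = I_d - g_{ij}^* (g_{ij}^*)^T$ is positive semi-definite with one-dimensional null space $\myspan\{g_{ij}^*\}$. Hence $K_i$ is PSD and, as a sum of PSD matrices, is singular if and only if there exists a nonzero $x \in \R^d$ with $P_{g_{ij}^*} x = 0$ for every $j \in \N_i$, i.e., $x$ is parallel to every $g_{ij}^*$ with $j \in \N_i$ (equivalently, all neighboring bearings at $i$ are collinear).

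Assuming such an $x$ exists, I would construct the block vector $y \in \R^{d n_f}$ whose $i$-th block equals $x$ and whose other blocks are zero, then verify $\L_{ff} y = 0$ block by block. For the $i$-th block, $[\L_{ff}]_{ii} = K_i$ by the definition of the bearing Laplacian, so $[\L_{ff} y]_i = K_i x = 0$. For any other follower $k \neq i$, only the term $[\L_{ff}]_{ki} x$ is nonzero in the product, and this block vanishes for $k \notin \N_i$ by construction of $\L_{ff}$; the remaining case $k \in \V_f \cap \N_i$ is where the bidirectional edge assumption among followers is essential, since then $[\L_{ff}]_{ki} = -P_{g_{ki}^*}$, and using $g_{ki}^* = -g_{ik}^*$ together with the elementary identity $P_{g} = P_{-g}$ yields $P_{g_{ki}^*} x = P_{g_{ik}^*} x = 0$. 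Thus $y$ is a nonzero vector in $\Null(\L_{ff})$, contradicting the nonsingularity of $\L_{ff}$ guaranteed by Theorem~\ref{theorem_uniquenessOfTargetFormation}.

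The only subtle step is handling the off-diagonal block at a follower neighbor $k$: one might worry that $P_{g_{ki}^*}$ and $P_{g_{ik}^*}$ differ, but the sign-invariance of the orthogonal projection onto a line makes them equal, which is precisely what allows the collinearity of $x$ with $g_{ik}^*$ to kill that term as well. I anticipate this to be the only place requiring care; the rest is a straightforward block computation. Note also that the argument uses bidirectionality only among the followers (so that $(k,i) \in \E$ implies $(i,k) \in \E$ whenever both are followers), which matches the standing assumption in the paper, and no condition on leader-follower edge directions is needed because the construction only probes columns of $\L_{ff}$ indexed by a follower.
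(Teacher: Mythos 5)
Your proof is correct, and its second half takes a genuinely different route from the paper's. Both arguments begin identically: the paper also characterizes singularity of $K_i$ through the quadratic form $x^TK_ix=\sum_{j\in\N_i}\|P_{g_{ij}^*}x\|^2$, concluding that $K_i$ is singular if and only if some nonzero $x$ is collinear with all the bearings $\{g_{ij}^*\}_{j\in\N_i}$. But from there the paper argues \emph{geometrically}: it asserts that $p_i^*$ can then slide along the common line without changing any bearings, so the target formation is not unique, and stops there. You instead argue \emph{algebraically}: you embed $x$ as the $i$th block of a vector $y\in\R^{dn_f}$, verify $\L_{ff}y=0$ block by block, and invoke Theorem~\ref{theorem_uniquenessOfTargetFormation} (uniqueness $\Leftrightarrow$ $\L_{ff}$ nonsingular) to conclude non-uniqueness. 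Your version buys rigor where the paper is informal: the paper never checks that sliding $p_i^*$ also preserves the bearings of edges pointing \emph{into} agent $i$, whereas your computation handles exactly those off-diagonal blocks explicitly, correctly isolating where the bidirectionality assumption among followers and the sign-invariance $P_{g}=P_{-g}$ are used (and noting that leader-to-follower edge directions never enter, since only the follower-indexed columns of $\L_{ff}$ are probed). The trade-off is that your argument leans on Theorem~\ref{theorem_uniquenessOfTargetFormation}, whose proof is delegated to an external reference, while the paper's geometric sketch is self-contained and conveys the intuition (a follower with collinear neighbor bearings is free to move along that line) more directly.
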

\begin{proof}
First of all, the matrix $K_i$ is singular if and only if the bearings $\{g_{ij}^*\}_{j\in\N_i}$ are collinear, because for any $x\in\R^d$, $x^TK_ix=0\Leftrightarrow \sum_{j\in\N_i}x^TP_{g_{ij}^*}x=0\Leftrightarrow P_{g_{ij}^*}x=0, \forall j\in\N_i$.
Since $\Null(P_{g_{ij}^*})=\myspan\{g_{ij}^*\}$, we know $x^TK_ix=0$ if and only if $x$ and $\{g_{ij}^*\}_{j\in\N_i}$ are collinear.
If $\{g_{ij}^*\}_{j\in\N_i}$ are collinear, the follower $p_i^*$ cannot be uniquely determined in the target formation because $p_i^*$ can move along $g_{ij}^*$ without changing any bearings.
As a result, if $K_i$ is singular, the target formation is not unique.
\end{proof}

The convergence of control law \eqref{eq_controlLaw_varyingVelocity} is analyzed below.

\begin{theorem}\label{theorem_convergenceVaryingVelocity}
Under control law \eqref{eq_controlLaw_varyingVelocity}, for any time-varying leader velocity $v_\ell(t)$, the tracking errors $\delta_p(t)$ and $\delta_v(t)$ as defined in \eqref{eq_trackingErrorDef} globally and exponentially converge to zero.
\end{theorem}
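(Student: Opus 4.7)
The plan is to convert the control law \eqref{eq_controlLaw_varyingVelocity} into a compact matrix form over all followers, absorb the acceleration feedback on the left-hand side, and show that the resulting closed-loop error dynamics decouple into a simple Hurwitz second-order system that is independent of the (time-varying) leader input. The proof should mirror the structure of the proof of Theorem~\ref{theorem_convergenceConstantVelocity}, with the key new ingredient being that the neighbor-acceleration feedback exactly cancels the forcing term $\L_{ff}^{-1}\L_{f\ell}\dot{v}_\ell$ that appeared in \eqref{eq_errorDynamics_constantVelocity_matrixForm}.

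\textbf{Step 1 (matrix form).} I would first introduce the block-diagonal matrix $K=\blkdiag(K_i)_{i\in\V_f}\in\R^{dn_f\times dn_f}$, which by Lemma~\ref{lemma_nonsingularityOfKi} is nonsingular. Writing out \eqref{eq_controlLaw_varyingVelocity} for all $i\in\V_f$ and using the identity $K=\L_{ff}+(\L_{ff}-K-(-\L_{ff}+K))$ carefully: note that for $i\in\V_f$, the diagonal block $[\L_{ff}]_{ii}=\sum_{j\in\N_i}P_{g_{ij}^*}=K_i$ includes leader and follower neighbors, while the off-diagonal blocks encode $-P_{g_{ij}^*}$. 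Splitting the sum $\sum_{j\in\N_i}P_{g_{ij}^*}\dot{v}_j$ according to whether $j\in\V_f$ (in which case $\dot v_j=u_j$) or $j\in\V_\ell$ (in which case $\dot v_j$ is a known feed-forward), this term equals $(K-\L_{ff})u_f-\L_{f\ell}\dot v_\ell$ in vector form. Using this and the identities $\L_{ff}p_f+\L_{f\ell}p_\ell=\L_{ff}\delta_p$ and $\L_{ff}v_f+\L_{f\ell}v_\ell=\L_{ff}\delta_v$ from the proof of Theorem~\ref{theorem_convergenceConstantVelocity}, the control law rearranges to
\begin{align*}
\L_{ff}u_f = -k_p\L_{ff}\delta_p - k_v\L_{ff}\delta_v - \L_{f\ell}\dot v_\ell .
\end{align*}

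\textbf{Step 2 (cancellation of the forcing term).} Since $\L_{ff}$ is nonsingular, the displayed equation gives $u_f=-k_p\delta_p-k_v\delta_v-\L_{ff}^{-1}\L_{f\ell}\dot v_\ell$. Differentiating the target velocity expression \eqref{eq_pfstarExpression} yields $\dot v_f^*=-\L_{ff}^{-1}\L_{f\ell}\dot v_\ell$, so $u_f=\dot v_f^*-k_p\delta_p-k_v\delta_v$. Consequently $\dot\delta_v=\dot v_f-\dot v_f^*=u_f-\dot v_f^*=-k_p\delta_p-k_v\delta_v$, and together with $\dot\delta_p=\delta_v$ the error dynamics take the decoupled form
\begin{align*}
\begin{bmatrix}\dot\delta_p\\ \dot\delta_v\end{bmatrix}
=\begin{bmatrix}0 & I\\ -k_p I & -k_v I\end{bmatrix}\begin{bmatrix}\delta_p\\ \delta_v\end{bmatrix},
\end{align*}
with no dependence on $\dot v_\ell$ whatsoever.

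\textbf{Step 3 (Hurwitz conclusion).} The state matrix above splits into $dn_f$ copies of the $2\times 2$ block $\left[\begin{smallmatrix}0 & 1\\ -k_p & -k_v\end{smallmatrix}\right]$, whose characteristic polynomial $\lambda^2+k_v\lambda+k_p$ has roots with negative real parts for any $k_p,k_v>0$. Hence the state matrix is Hurwitz and $\delta_p,\delta_v$ converge globally and exponentially to zero, regardless of $v_\ell(t)$.

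\textbf{Main obstacle.} The only delicate step is Step~1: one must carefully track how $K_i^{-1}$ and the acceleration feed-forward $P_{g_{ij}^*}\dot v_j$ combine across followers so that the left-hand side becomes $\L_{ff}u_f$ rather than $Ku_f$. This cancellation is precisely why the control law is divided by $K_i^{-1}$ and why the acceleration of \emph{every} neighbor (not just leaders) must be fed back; getting the block partition right and verifying the sign of the off-diagonal contributions is the bookkeeping heart of the proof. Once this is done, the rest reduces to the same Hurwitz argument used in Theorem~\ref{theorem_convergenceConstantVelocity}.
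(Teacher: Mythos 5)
Your proposal is correct and takes essentially the same route as the paper: the paper likewise multiplies the control law by $K_i$, recognizes the stacked form $\L_{ff}\dot v_f+\L_{f\ell}\dot v_\ell=-k_p\L_{ff}\delta_p-k_v\L_{ff}\delta_v$, inverts the nonsingular $\L_{ff}$ so that the feed-forward exactly cancels $\L_{ff}^{-1}\L_{f\ell}\dot v_\ell$, and concludes via the Hurwitz dynamics $\dot\delta_v=-k_p\delta_p-k_v\delta_v$. One cosmetic remark: the parenthetical identity $K=\L_{ff}+(\L_{ff}-K-(-\L_{ff}+K))$ in your Step~1 is garbled (it is false as written), but it is never actually used---the block-partition bookkeeping that follows it, giving $(K-\L_{ff})u_f-\L_{f\ell}\dot v_\ell$ for the stacked acceleration feedback, is correct and delivers the same equation as the paper.
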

\begin{proof}
Multiplying $K_i$ on both sides of control law \eqref{eq_controlLaw_varyingVelocity} gives \begin{align*}
\sum_{j\in\N_i}P_{g_{ij}^*} (\dot{v}_i-\dot{v}_j)&=\sum_{j\in\N_i} P_{g^*_{ij}}\left[-k_p(p_i-p_j)-k_v(v_i-v_j)\right],
\end{align*}
whose matrix-vector form is
\begin{align*}
\L_{ff}\dot{v}_f+\L_{f\ell}\dot{v}_\ell
&=-k_p(\L_{ff}p_f+\L_{f\ell}p_\ell)-k_v(\L_{ff}v_f+\L_{f\ell}v_\ell) \nonumber\\
&=-k_p\L_{ff}\delta_p-k_v\L_{ff}\delta_v.
\end{align*}
It follows that $\dot{v}_f=-k_p\delta_p-k_v\delta_v-\L_{ff}^{-1}\L_{f\ell}\dot{v}_\ell$.
Then the tracking error dynamics are $\dot{\delta}_p=\delta_v$ and $\dot{\delta}_v=\dot{v}_f+\L_{ff}^{-1}\L_{f\ell}\dot{v}_\ell=-k_p\delta_p+k_v\delta_v$,
which are expressed in a compact form as
\begin{align}\label{eq_errorDynamics_varyingVelocity_matrixForm}
\left[
  \begin{array}{c}
    \dot{\delta}_p \\
    \dot{\delta}_v \\
  \end{array}
\right]
=
\left[
  \begin{array}{cc}
    0 & I \\
    -k_pI & -k_vI \\
  \end{array}
\right]
\left[
  \begin{array}{c}
    \delta_p \\
    \delta_v \\
  \end{array}
\right].
\end{align}
The eigenvalue of the state matrix is $\lambda=(-k_v\pm\sqrt{k_v^2-4k_p})/2$, which is always in the open left-half plane for any $k_p, k_v>0$.
The global and exponential convergence result follows.
\end{proof}

\begin{figure}
  \centering
  \subfloat[Trajectory (The dark area stands for an obstacle.)]{\includegraphics[width=\linewidth]{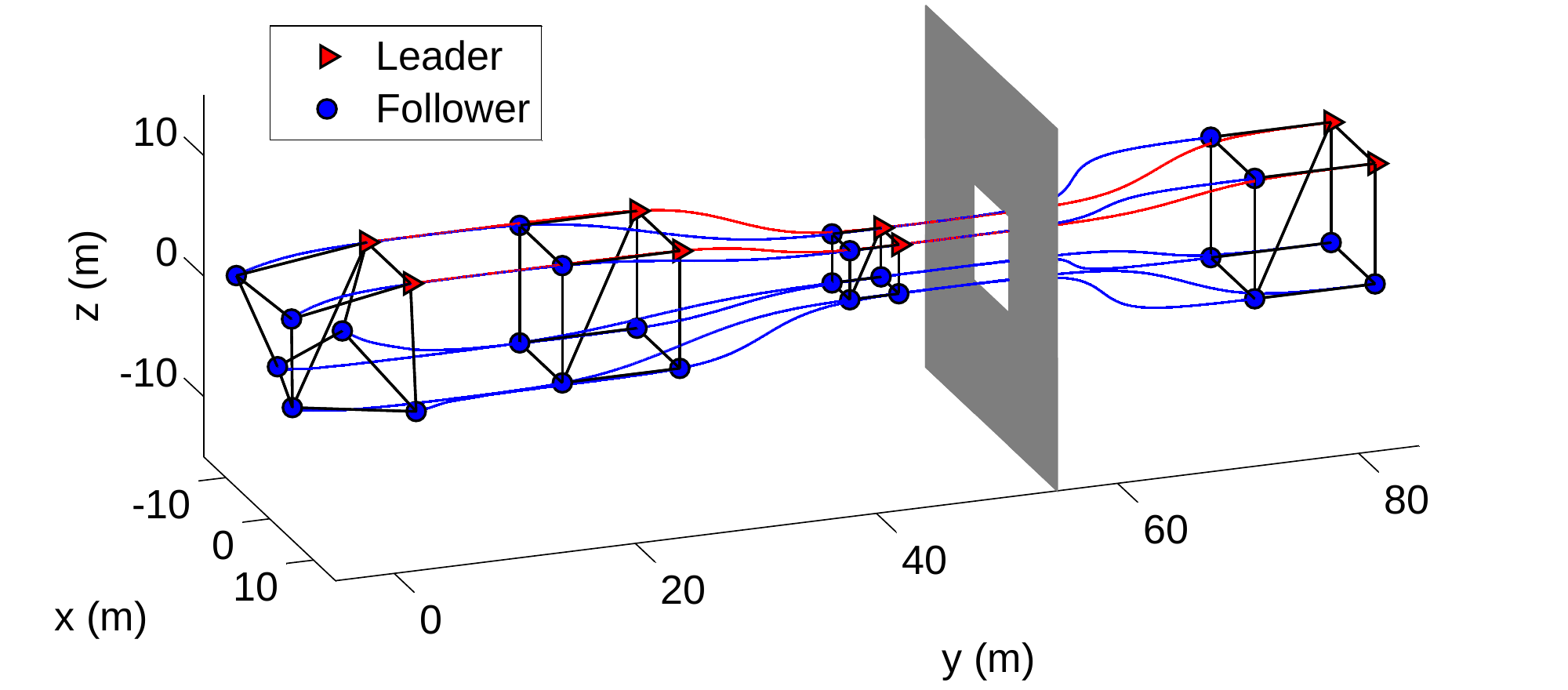}}\\
  \subfloat[Total bearing error: $\sum_{(i,j)\in\E}\|g_{ij}(t)-g_{ij}^*\|$]{\includegraphics[width=\linewidth]{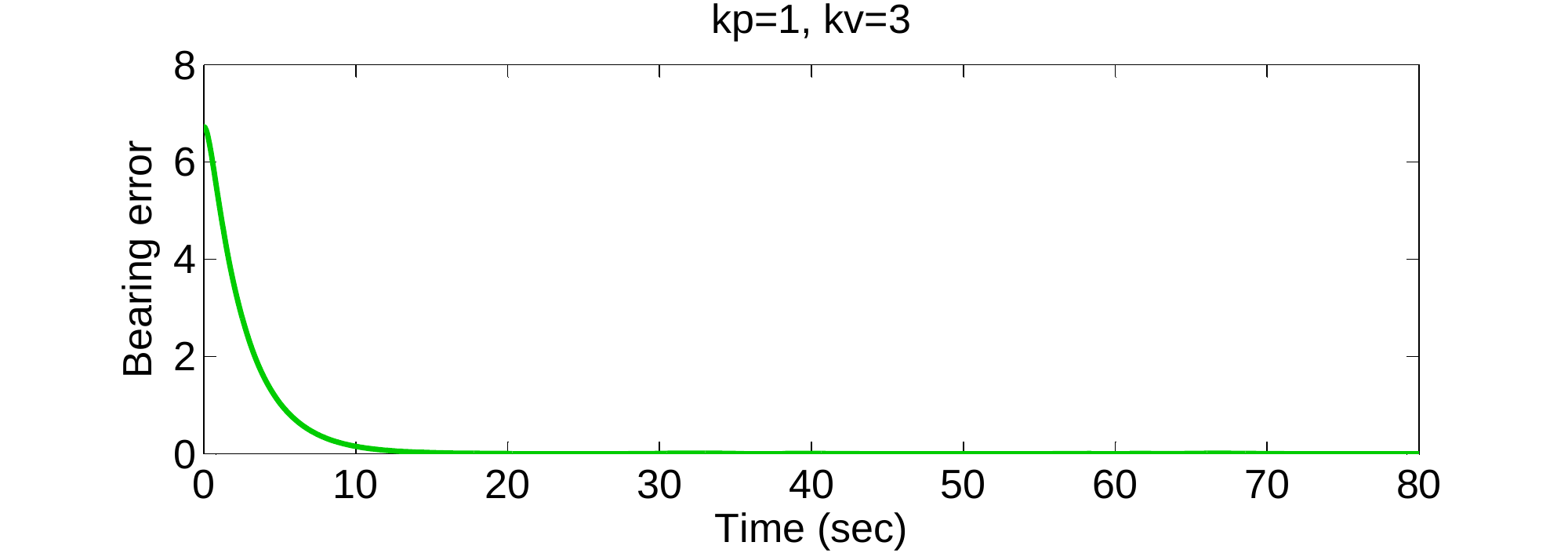}}\\
  \subfloat[Velocity]{\includegraphics[width=\linewidth]{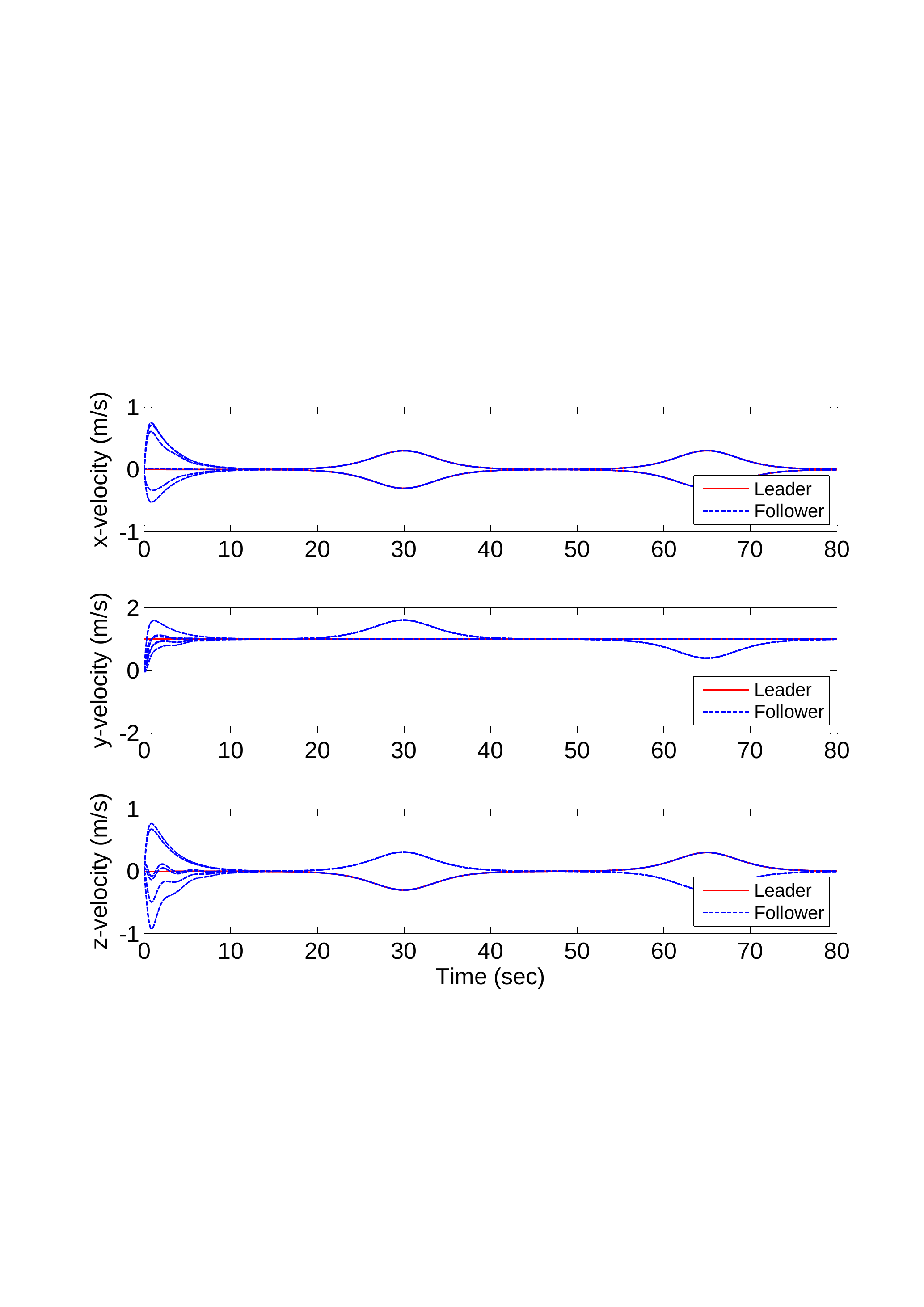}}
  \caption{A simulation example to demonstrate control law \eqref{eq_controlLaw_varyingVelocity}.}
  \label{fig_sim_3DTransVaryingVelocity}
\end{figure}

By comparing the error dynamics in \eqref{eq_errorDynamics_varyingVelocity_matrixForm} and \eqref{eq_errorDynamics_constantVelocity_matrixForm}, we see that the role of the acceleration feedback in control law \eqref{eq_controlLaw_varyingVelocity} is to eliminate the term that contains $\dot{v}_\ell(t)$ so that it does not affect the convergence of the errors.

A simulation example is shown in Figure~\ref{fig_sim_3DTransVaryingVelocity} to illustrate control law \eqref{eq_controlLaw_varyingVelocity}.
The target formation in this example is the three-dimensional cube shown in Figure~\ref{fig_Example_Targetformation}(c), which has two leaders and six followers.
As shown in Figure~\ref{fig_sim_3DTransVaryingVelocity}(a)--(b), the translation and scale of the formation are continuously varying and, in the meantime, the formation converges from an initial configuration to the desired pattern.
Although the velocities of the leaders are time-varying, the desired formation pattern is maintained exactly during the formation evolution.

The simulation example also demonstrates that the proposed control law can be used for obstacle avoidance, such as passing through narrow passages.
In practice, collision avoidance requires sophisticated mechanisms such as obstacle detection and path generation (see, for example, \cite{das2002vision}).
Details on obstacle avoidance are out of the scope of this paper.

\section{Bearing-Based Formation Control with Practical Issues}\label{section_practicalIssues}
In this section, we consider bearing-based formation control in the presence of some issues that may appear in practical implementations, including input disturbances, input saturation, and collision avoidance among the agents.

\subsection{Constant Input Disturbance}

Suppose there exists an unknown constant input disturbance for each follower.
The dynamics of follower $i\in\V_f$ are
\begin{align*}
\dot{p}_i= v_i, \quad \dot{v}_i=u_i+\mathrm{w}_i,
\end{align*}
where $\mathrm{w}_i\in\R^d$ is an unknown constant signal, and let $\mathrm{w}_f=[\mathrm{w}_1^T,\dots,\mathrm{w}_{n_f}^T]^T$.
In practice, the constant input disturbance might be caused by, for example, constant sensor or actuator biases.
In order to handle the input disturbance, we add an integral control term to control law \eqref{eq_controlLaw_ConstantVelocity} and obtain
\begin{align}\label{eq_controlLaw_integral_ConstantVelocity}
u_i&=-\sum_{j\in\N_i}P_{g_{ij}^*}\bigg[k_p(p_i-p_j)+k_v(v_i-v_j)  \bigg. \nonumber\\
&\qquad\qquad\qquad\quad \bigg. +k_I\int_0^t (p_i-p_j)\D \tau\bigg],
\end{align}
where $k_I>0$ is the constant integral control gain.
We next show that the integral control will not only eliminate the impact of the constant
disturbance but will also handle the case where $\dot{v}_\ell(t)$ is nonzero and constant.

\begin{theorem}\label{theorem_convergenceIntegralConstant}
Consider the control law \eqref{eq_controlLaw_integral_ConstantVelocity} with constant disturbance $\mathrm{w}_f$ and constant leader acceleration $\dot{v}_\ell$. If the control gains satisfy $0<k_I<k_pk_v\lambda_{\min}(\L_{ff})$, then the tracking errors $\delta_p(t)$ and $\delta_v(t)$ globally and exponentially converge to zero.
\end{theorem}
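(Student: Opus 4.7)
The plan is to augment the state with the integral of the position error, express the closed-loop error dynamics as a linear time-invariant system driven by a constant forcing term (arising jointly from the disturbance $\mathrm{w}_f$ and the constant leader acceleration $\dot{v}_\ell$), identify its unique equilibrium, and then show that the resulting state matrix is Hurwitz under the stated gain condition.

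First, I would rewrite control law~\eqref{eq_controlLaw_integral_ConstantVelocity} in matrix--vector form using the identities $\L_{ff} p_f + \L_{f\ell} p_\ell = \L_{ff}\delta_p$ and $\L_{ff} v_f + \L_{f\ell} v_\ell = \L_{ff}\delta_v$ established in \eqref{eq_pfstarExpression}. Introducing the augmented state $z(t)=\int_0^t \delta_p(\tau)\D\tau$ and using $\dot{v}_f^{*}=-\L_{ff}^{-1}\L_{f\ell}\dot{v}_\ell$, the closed-loop error dynamics read
\begin{align*}
\begin{bmatrix}\dot z\\ \dot{\delta}_p\\ \dot{\delta}_v\end{bmatrix}
= A\begin{bmatrix} z\\ \delta_p\\ \delta_v\end{bmatrix}
+\begin{bmatrix} 0\\ 0\\ \mathrm{w}_f+\L_{ff}^{-1}\L_{f\ell}\dot{v}_\ell\end{bmatrix},
\quad
A=\begin{bmatrix} 0 & I & 0\\ 0 & 0 & I\\ -k_I\L_{ff} & -k_p\L_{ff} & -k_v\L_{ff}\end{bmatrix}.
\end{align*}
Since $\dot{v}_\ell$ and $\mathrm{w}_f$ are constant, this is a standard LTI system with a constant input.

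Second, I would locate the unique equilibrium by setting the derivatives to zero: the first two block-rows force $\delta_p=0$ and $\delta_v=0$ automatically, while the third determines $z_\star = k_I^{-1}\L_{ff}^{-1}(\mathrm{w}_f+\L_{ff}^{-1}\L_{f\ell}\dot{v}_\ell)$, which exists because $\L_{ff}$ is nonsingular by Assumption~1. The shifted state $(\tilde z,\delta_p,\delta_v)=(z-z_\star,\delta_p,\delta_v)$ satisfies the homogeneous system $\dot{\xi}=A\xi$, so global exponential decay of $(\delta_p,\delta_v)$ reduces to showing $A$ is Hurwitz.

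Third, I would diagonalize $\L_{ff}=U\Lambda U^T$ (symmetric positive definite, with eigenvalues $\mu_i>0$) and apply the corresponding block similarity to $A$; this decouples the eigenvalue problem into $dn_f$ independent scalar cubics of the form
\begin{align*}
\lambda^3+k_v\mu_i\lambda^2+k_p\mu_i\lambda+k_I\mu_i=0.
\end{align*}
All coefficients are positive, and the Routh--Hurwitz criterion for a cubic gives stability iff $(k_v\mu_i)(k_p\mu_i)>k_I\mu_i$, i.e., $k_I<k_pk_v\mu_i$. Taking the worst case over $i$ yields exactly $k_I<k_pk_v\lambda_{\min}(\L_{ff})$, so under this hypothesis $A$ is Hurwitz and the tracking errors converge globally and exponentially.

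The only real obstacle is the block reduction that produces the scalar cubics: one must carefully verify that the companion-form structure of $A$ is preserved under the eigendecomposition of $\L_{ff}$, since $A$ mixes $\L_{ff}$-weighted blocks with identity blocks. Once that bookkeeping is in place, the Routh--Hurwitz step is immediate and, pleasingly, gives a sharp gain bound that matches the statement exactly.
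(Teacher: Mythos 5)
Your proposal is correct and follows essentially the same route as the paper's proof: augment the state with $\eta=\int_0^t\delta_p\,\D\tau$, write the closed-loop error dynamics as an LTI system with constant forcing from $\mathrm{w}_f$ and $\dot{v}_\ell$, reduce the characteristic polynomial to the scalar cubics $\lambda^3+k_v\mu\lambda^2+k_p\mu\lambda+k_I\mu=0$ over the eigenvalues $\mu$ of $\L_{ff}$, and apply Routh--Hurwitz to obtain exactly the bound $0<k_I<k_pk_v\lambda_{\min}(\L_{ff})$. Your explicit eigendecomposition $\L_{ff}=U\Lambda U^T$ (valid since $\L_{ff}$ is symmetric positive definite under Assumption~1) and the coordinate shift to the equilibrium merely make rigorous what the paper states via the determinant factorization $\det(\lambda^3 I+k_v\L_{ff}\lambda^2+k_p\L_{ff}\lambda+k_I\L_{ff})=0$ and its steady-state computation.
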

\begin{proof}
The matrix-vector form of control law~\eqref{eq_controlLaw_integral_ConstantVelocity} is
\begin{align*}
\dot{v}_f&=-k_I\int_0^t(\L_{ff}p_f+\L_{f\ell}p_\ell)\D \tau-k_p(\L_{ff}p_f+\L_{f\ell}p_\ell)\\
&\qquad\qquad-k_v(\L_{ff}p_f+\L_{f\ell}p_\ell)+\mathrm{w}_f \\
&=-k_I\L_{ff}\int_0^t \delta_p\D \tau -k_p\L_{ff}\delta_p-k_v\L_{ff}\delta_v+\mathrm{w}_f.
\end{align*}
Denote $\eta\triangleq\int_0^t \delta_p\D \tau$. It then follows that $\dot{\eta}=\delta_p$, $\dot{\delta}_p=\delta_v$, and $\dot{\delta}_v=\dot{v}_f+\L_{ff}^{-1}\L_{f\ell}\dot{v}_\ell=-k_I\L_{ff}\eta -k_p\L_{ff}\delta_p-k_v\L_{ff}\delta_v+\mathrm{w}_f+\L_{ff}^{-1}\L_{f\ell}\dot{v}_\ell$, the matrix-vector form of which is given by
\begin{align}\label{eq_errorDynamics_integral_constantVelocity_matrixForm}
\left[
  \begin{array}{c}
    \dot{\eta} \\
    \dot{\delta}_p \\
    \dot{\delta}_v \\
  \end{array}
\right]
&=
\left[
  \begin{array}{ccc}
    0 & I & 0 \\
    0 & 0 & I \\
    -k_I\L_{ff} & -k_p\L_{ff} & -k_v\L_{ff} \\
  \end{array}
\right]
\left[
  \begin{array}{c}
    \eta \\
    \delta_p \\
    \delta_v \\
  \end{array}
\right]\nonumber\\
&\qquad+\left[
  \begin{array}{c}
    0 \\
    0 \\
    \mathrm{w}_f \\
  \end{array}
\right]
+\left[
  \begin{array}{c}
    0 \\
    0 \\
    \L_{ff}^{-1}\L_{f\ell} \\
  \end{array}
\right]\dot{v}_\ell.
\end{align}
Denote $A$ as the state matrix of the above dynamics with $\lambda$ an associated eigenvalue. We next identify the condition for $\Re(\lambda)<0$.
Note the state matrix is in the controllable canonical form.
Then the characteristic polynomial is
\begin{align*}
\det(\lambda I-A)=\det(\lambda^3I+k_v\L_{ff}\lambda^2+k_p\L_{ff}\lambda+k_I\L_{ff}).
\end{align*}
As a result, $\lambda^3$ can be viewed as an eigenvalue of the matrix $-(k_v\lambda^2+k_v\lambda+k_I)\L_{ff}$.
By denoting $\mu$ as an eigenvalue of $\L_{ff}$, we have $\lambda^3+k_v\mu\lambda^2+k_p\mu\lambda+k_I\mu=0$.
By the Routh-Hurwitz stability criterion, we have $\Re(\lambda)<0$ if and only $k_v\mu, k_p\mu, k_I\mu>0$ and $(k_v\mu)(k_p\mu)>k_I\mu$.
Since $k_v, k_p, \mu>0$, we have $0<k_I<k_vk_p\mu$.
In order to make $\Re(\lambda)<0$ for all $\mu$, it is required $0<k_I<k_vk_p\lambda_{\min}(\L_{ff})$ where $\lambda_{\min}(\L_{ff})$ is the minimum eigenvalue of $\L_{ff}$.
When $A$ is Hurwitz, given constant $\mathrm{w}_f$ and $\dot{v}_\ell$, the steady state is $\delta_p(\infty)=\delta_v(\infty)=0$ and $\eta(\infty)=-\L_{ff}^{-1}(\mathrm{w}_f+\L_{ff}^{-1}\L_{f\ell}\dot{v}_\ell)/k_I$.
\end{proof}

As can be seen from the error dynamics \eqref{eq_errorDynamics_integral_constantVelocity_matrixForm}, when $\dot{v}_\ell$ is constant, it has the same impact as an input disturbance and hence is handled by the integral control.
The idea of integral control has also been applied in consensus, distance-based and bearing-based formation maneuver control problems \cite{Andreasson2014TAC,oshri2015ECC,zhao2015MSC}.
It is also interesting to note that the integral control gain must be bounded by $\lambda_{\min}(\L_{ff})$, which we expect should have graph-theoretic interpretations and is the subject of future work.

Similarly, by adding an integral control term to control law \eqref{eq_controlLaw_varyingVelocity}, we obtain the following control law that can handle the unknown constant input disturbance and time-varying $v_\ell(t)$,
\begin{align}\label{eq_controlLaw_integral_varyingVelocity}
u_i&=-K_i^{-1}\sum_{j\in\N_i}P_{g_{ij}^*}\bigg[k_p(p_i-p_j)+k_v(v_i-v_j)-\dot{v}_j\bigg. \nonumber\\
&\qquad\qquad\qquad\qquad\quad \bigg.+k_I\int_0^t (p_i-p_j)\D \tau\bigg].
\end{align}
The convergence result for control law \eqref{eq_controlLaw_integral_varyingVelocity} is given below. The proof is similar to Theorem~\ref{theorem_convergenceIntegralConstant} and omitted.

\begin{theorem}
Consider the control law~\eqref{eq_controlLaw_integral_varyingVelocity} with constant disturbance $\mathrm{w}_f$ and  time-varying leader velocity ${v}_\ell(t)$.  If the control gains satisfy $0<k_I<k_pk_v$, then the tracking errors $\delta_p(t)$ and $\delta_v(t)$ globally and exponentially converge to zero.
\end{theorem}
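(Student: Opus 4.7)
The plan is to mirror the proofs of Theorem~\ref{theorem_convergenceVaryingVelocity} and Theorem~\ref{theorem_convergenceIntegralConstant}. The role of the acceleration feedback $-\dot{v}_j$ is, as before, to cancel the $\L_{ff}$ factor in front of the state, and the role of the integral term is to absorb the constant disturbance. The condition $k_I < k_p k_v$ will arise directly from a scalar Routh--Hurwitz test applied to the resulting block-decoupled dynamics.

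First I would multiply control law~\eqref{eq_controlLaw_integral_varyingVelocity} by $K_i$ on the left (exactly as in the proof of Theorem~\ref{theorem_convergenceVaryingVelocity}), so that $K_i u_i + K_i \mathrm{w}_i = K_i\dot{v}_i$ produces, after moving the $-\dot{v}_j$ terms to the left, the identity
\begin{align*}
\sum_{j\in\N_i} P_{g_{ij}^*}(\dot{v}_i - \dot{v}_j)
&= -\sum_{j\in\N_i} P_{g_{ij}^*}\!\left[k_p(p_i-p_j)+k_v(v_i-v_j)+k_I\!\int_0^t(p_i-p_j)\D\tau\right] \\
&\quad + K_i\mathrm{w}_i.
\end{align*}
Writing $K_f \triangleq \blkdiag\{K_i\}_{i\in\V_f}$, the vectorized form becomes
\begin{align*}
\L_{ff}\dot{v}_f + \L_{f\ell}\dot{v}_\ell
= -k_p \L_{ff}\delta_p - k_v \L_{ff}\delta_v - k_I \L_{ff}\eta + K_f\mathrm{w}_f,
\end{align*}
where $\eta \triangleq \int_0^t \delta_p\D\tau$. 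Here I use that $\L_{ff}p_f + \L_{f\ell}p_\ell = \L_{ff}\delta_p$ and, analogously, $\L_{ff}\int_0^t p_f\D\tau + \L_{f\ell}\int_0^t p_\ell\D\tau = \L_{ff}\eta$, which follows by integrating the identity $p_f^* = -\L_{ff}^{-1}\L_{f\ell}p_\ell$ from~\eqref{eq_pfstarExpression}.

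Next I multiply by $\L_{ff}^{-1}$ (nonsingular by the standing uniqueness assumption) and use $\dot{\delta}_v = \dot{v}_f + \L_{ff}^{-1}\L_{f\ell}\dot{v}_\ell$ to eliminate the $\dot{v}_\ell$ dependence, arriving at the closed-loop error dynamics
\begin{align*}
\left[\begin{array}{c}\dot{\eta} \\ \dot{\delta}_p \\ \dot{\delta}_v\end{array}\right]
= \left[\begin{array}{ccc} 0 & I & 0 \\ 0 & 0 & I \\ -k_I I & -k_p I & -k_v I \end{array}\right]
\left[\begin{array}{c}\eta \\ \delta_p \\ \delta_v\end{array}\right]
+ \left[\begin{array}{c} 0 \\ 0 \\ \L_{ff}^{-1} K_f \mathrm{w}_f\end{array}\right].
\end{align*}
The key structural observation is that, thanks to the acceleration feedback, $\L_{ff}$ no longer appears in the state matrix; the dynamics are block-decoupled into $dn_f$ identical scalar third-order systems with characteristic polynomial $\lambda^3 + k_v\lambda^2 + k_p\lambda + k_I$.

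Finally, I apply the Routh--Hurwitz criterion to this cubic: all roots lie in the open left half-plane iff $k_p, k_v, k_I > 0$ and $k_v k_p > k_I$, which is precisely the hypothesis $0 < k_I < k_p k_v$. Hurwitzness of the state matrix and constancy of $\mathrm{w}_f$ then imply that $(\eta, \delta_p, \delta_v)$ globally and exponentially approaches the steady state, which satisfies $\delta_p(\infty) = \delta_v(\infty) = 0$ and $\eta(\infty) = \L_{ff}^{-1} K_f \mathrm{w}_f / k_I$. I expect no real obstacle here: the only bookkeeping subtlety is the integral term reduction $\L_{ff}\int_0^t p_f\D\tau + \L_{f\ell}\int_0^t p_\ell\D\tau = \L_{ff}\eta$, which follows cleanly from~\eqref{eq_pfstarExpression}, and the rest is an eigenvalue computation identical in spirit to the previous two theorems.
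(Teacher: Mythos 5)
Your proposal is correct and is precisely the argument the paper intends: the paper omits this proof, noting only that it is ``similar to Theorem~\ref{theorem_convergenceIntegralConstant},'' and your derivation combines the $K_i$-premultiplication trick from the proof of Theorem~\ref{theorem_convergenceVaryingVelocity} (which cancels both the $\dot{v}_\ell$ term and the $\L_{ff}$ factor) with the augmented state $\eta=\int_0^t\delta_p\,\D\tau$ and the Routh--Hurwitz test from the proof of Theorem~\ref{theorem_convergenceIntegralConstant}, now applied to the decoupled cubic $\lambda^3+k_v\lambda^2+k_p\lambda+k_I$, which yields exactly $0<k_I<k_pk_v$. Your bookkeeping, including the reduction $\int_0^t(\L_{ff}p_f+\L_{f\ell}p_\ell)\,\D\tau=\L_{ff}\eta$ and the steady state $\eta(\infty)=\L_{ff}^{-1}K_f\mathrm{w}_f/k_I$ with $\delta_p(\infty)=\delta_v(\infty)=0$, checks out.
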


\subsection{Acceleration Saturation}

In practical implementations, the acceleration input is always bounded.
In the presence of acceleration saturation, the control law \eqref{eq_controlLaw_ConstantVelocity} becomes
\begin{align}\label{eq_controlLaw_saturation_constantVelocity}
u_i&=\sat\left\{-\sum_{j\in\N_i}P_{g_{ij}^*}\left[k_p(p_i-p_j)+k_v(v_i-v_j)\right]\right\},
\end{align}
where $\sat(\cdot)$ is a saturation function that is either $\sat(x)=\sign(x)\min\{|x|,\beta\}$ or $\sat(x)=\beta\tanh(x)$ where $x\in\R$ and $\beta>0$ is the constant bound for $|x|$.
For a vector $x=[x_1,\dots,x_q]^T\in\R^q$, $\sat(x)$ is defined component-wise as $\sat(x)=[\sat(x_1),\dots,\sat(x_q)]^T$.

Due to the saturation function, the formation dynamics become nonlinear and the formation stability can be proven by a Lyapunov approach.
Inspired by the work in \cite{MengZiyang2013SCL}, we introduce the integral function $\Phi(x)\triangleq\int_0^x \sat(\tau)\D \tau$ for $x\in\R$.
Due to the properties of $\sat(\cdot)$, we have that $\Phi(x)\ge0$ for all $x\in\R$ and $\Phi(x)=0$ if and only if $x=0$.
In the case of $\sat(x)=\beta\tanh(x)$, we have $\Phi(x)=\beta\log(\cosh(x))$.
For a vector $x=[x_1,\dots,x_q]^T\in\R^q$, $\Phi(x)$ is defined component-wise as
\begin{align*}
\Phi(x)=\left[\int_0^{x_1} \sat(\tau)\D \tau, \dots, \int_0^{x_q} \sat(\tau)\D \tau\right]^T\in\R^q.
\end{align*}
The useful properties of $\Phi(\cdot)$ and $\sat(\cdot)$ are given below.
\begin{lemma}\label{lemma_properties_PhiFunction}
Given $x(t)\in\R^q$, the quantity $\one^T\Phi(x)$ satisfies
\begin{enumerate}[(a)]
\item $\one^T\Phi(x)\ge0$ and $\one^T\Phi(x)=0$ if and only if $x=0$.
\item ${\D (\one^T\Phi(x))}/{\D t}=\dot{x}^T\sat(x)$.
\end{enumerate}
\end{lemma}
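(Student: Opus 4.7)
The plan is to prove both properties by reducing each to a statement about a single real variable and then lifting componentwise. Throughout I will exploit two structural facts about both admissible saturation functions $\sat(\cdot)$: they are continuous, they vanish only at the origin, and they share the sign of their argument, i.e.\ $\tau\,\sat(\tau)\ge 0$ with equality iff $\tau=0$.

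For part (a), I would first rewrite $\one^T\Phi(x)=\sum_{i=1}^q \int_0^{x_i}\sat(\tau)\D\tau$ and treat each summand separately. Define $\phi(y)\triangleq\int_0^y\sat(\tau)\D\tau$ for $y\in\R$. When $y\ge 0$, the integrand $\sat(\tau)$ is nonnegative on $[0,y]$, so $\phi(y)\ge 0$; when $y<0$, I rewrite $\phi(y)=-\int_y^0\sat(\tau)\D\tau$, and since $\sat(\tau)\le 0$ on $[y,0]$ the integral is nonpositive and thus $\phi(y)\ge 0$. Continuity of $\sat$ together with the fact that $\sat$ vanishes only at the origin yields strict positivity of $\phi(y)$ whenever $y\ne 0$. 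Summing over $i$, $\one^T\Phi(x)$ is a sum of nonnegative terms that vanishes if and only if each summand vanishes, i.e.\ iff $x_i=0$ for all $i$, which is exactly $x=0$.

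For part (b), I would apply the fundamental theorem of calculus combined with the chain rule to each component. Since $\sat(\cdot)$ is continuous, the function $\phi$ defined above is continuously differentiable with $\phi'(y)=\sat(y)$. Therefore, along any differentiable trajectory $x_i(t)$,
\begin{align*}
\frac{\D}{\D t}\phi(x_i(t)) = \phi'(x_i(t))\,\dot{x}_i(t) = \sat(x_i(t))\,\dot{x}_i(t).
\end{align*}
Summing over $i=1,\dots,q$ gives
\begin{align*}
\frac{\D}{\D t}\bigl(\one^T\Phi(x)\bigr) = \sum_{i=1}^q \sat(x_i)\,\dot{x}_i = \dot{x}^T\sat(x),
\end{align*}
which is the claimed identity.

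Neither part is technically difficult; the only minor obstacle is being careful about the sign of $\phi(y)$ when $y<0$ (the integration limits flip), and ensuring that the argument works uniformly for both admissible choices $\sat(\tau)=\sign(\tau)\min\{|\tau|,\beta\}$ and $\sat(\tau)=\beta\tanh(\tau)$. This is handled by invoking only the two structural properties listed above, rather than the explicit formulas.
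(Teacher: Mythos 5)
Your proof is correct and follows essentially the same route as the paper: reduce $\one^T\Phi(x)$ to a sum of scalar integrals, use the sign property $\tau\,\sat(\tau)\ge 0$ (with equality iff $\tau=0$) for part (a), and apply the fundamental theorem of calculus with the chain rule componentwise for part (b). The only difference is that you spell out the scalar facts $\phi(y)\ge 0$ and $\phi'(y)=\sat(y)$ in detail, whereas the paper simply invokes the previously stated properties of $\Phi(\cdot)$.
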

\begin{proof}
Note $\one^T\Phi(x)=\sum_{i=1}^q \Phi(x_i)$.
Since $\Phi(x_i)\ge0$ and $\Phi(x_i)=0$ if and only if $x_i=0$, property (a) is proven.
The time derivative of $\one^T\Phi(x)$ is given by $\D(\one^T\Phi(x))/\D t=\sum_{i=1}^q \dot{\Phi}(x_i)=\sum_{i=1}^q \dot{x}_i\sat(x_i)=\dot{x}^T\sat(x)$.
\end{proof}
\begin{lemma}\label{lemma_satFunction_property}
For any two vectors $x, y\in\R^q$, it always holds that $y^T\left[\sat(x-y)-\sat(x)\right]\le0$ and $y^T\left[\sat(x)-\sat(x+y)\right]\le0$.
Moveover, if $\sat(\cdot)$ is strictly monotonic, the equalities hold if and only if $y=0$.
\end{lemma}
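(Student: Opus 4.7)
The plan is to reduce everything to the scalar case and exploit the fact that both admissible choices of $\sat(\cdot)$ (the hard saturation $\sign(x)\min\{|x|,\beta\}$ and the soft saturation $\beta\tanh(x)$) are odd, monotonically nondecreasing functions on $\R$. Since $\sat$ acts component-wise, the inner product $y^T[\sat(x-y)-\sat(x)]$ expands as $\sum_{i=1}^q y_i\bigl[\sat(x_i-y_i)-\sat(x_i)\bigr]$, so it suffices to verify that each scalar summand is nonpositive.

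For a fixed index $i$, I would consider the three cases $y_i>0$, $y_i<0$, and $y_i=0$. If $y_i>0$, then $x_i-y_i<x_i$, and monotonicity gives $\sat(x_i-y_i)\le\sat(x_i)$, so the bracketed term is nonpositive and its product with the positive scalar $y_i$ is $\le 0$. If $y_i<0$, the inequality $\sat(x_i-y_i)\ge\sat(x_i)$ flips the sign of the bracket, which when multiplied by the negative $y_i$ again yields a nonpositive product. The case $y_i=0$ is immediate. Summing over $i$ yields $y^T[\sat(x-y)-\sat(x)]\le 0$. The companion inequality $y^T[\sat(x)-\sat(x+y)]\le 0$ follows from exactly the same argument applied with $x$ replaced by $x+y$ (or equivalently by the substitution $x'=x+y$, $y'=y$ in the first inequality).

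For the equality statement under strict monotonicity, I would observe that each summand $y_i[\sat(x_i-y_i)-\sat(x_i)]$ is then strictly negative whenever $y_i\ne 0$, because strict monotonicity forces $\sat(x_i-y_i)<\sat(x_i)$ when $y_i>0$ and $\sat(x_i-y_i)>\sat(x_i)$ when $y_i<0$. Since all summands are nonpositive, the total can vanish only if every summand vanishes, which happens precisely when $y_i=0$ for all $i$, i.e., $y=0$. The converse $y=0\Rightarrow$ equality is trivial.

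No step here looks genuinely difficult; the only subtlety worth flagging is that the hard saturation $\sign(x)\min\{|x|,\beta\}$ is monotonic but not strictly monotonic (it is constant outside $[-\beta,\beta]$), which is exactly why the last sentence of the lemma is stated conditionally on strict monotonicity, leaving the strict version available only for the $\tanh$ case. I would make this explicit at the end of the proof so the hypothesis is clearly motivated.
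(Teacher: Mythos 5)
Your proposal is correct and follows essentially the same route as the paper's proof: expand the inner product componentwise, use monotonicity of $\sat(\cdot)$ to show each scalar term $y_i[\sat(x_i-y_i)-\sat(x_i)]$ is nonpositive via a sign analysis on $y_i$, handle the second inequality by symmetry, and invoke strict monotonicity for the equality case. Your closing remark that the hard saturation $\sign(x)\min\{|x|,\beta\}$ is not strictly monotonic outside $[-\beta,\beta]$ is a worthwhile clarification the paper leaves implicit, but it does not change the argument.
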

\begin{proof}
We only prove the first inequality; the second one can be proven similarly.
Note $y^T\left(\sat(x-y)-\sat(x)\right)=\sum_{i=1}^q y_i(\sat(x_i-y_i)-\sat(x_i))$.
It follows from the monotonicity of the saturation function that $\sat(x_i-y_i)-\sat(x_i)\ge0$ if $y_i<0$, and $\sat(x_i-y_i)-\sat(x_i)\le0$ if $y_i>0$.
Therefore, $y_i(\sat(x_i-y_i)-\sat(x_i))\le0$ for all $x_i, y_i\in\R$.
If $\sat(\cdot)$ is strictly monotonic, $y_i(\sat(x_i-y_i)-\sat(x_i))=0$ if and only if $y_i=0$, which completes the proof.
\end{proof}

With the above preparation, we now analyze the formation stability under control law \eqref{eq_controlLaw_saturation_constantVelocity}.

\begin{theorem}\label{theorem_ControlSaturationConstant}
Under control law \eqref{eq_controlLaw_saturation_constantVelocity} with a constant leader velocity $v_\ell(t)$, the tracking errors $\delta_p(t)$ and $\delta_v(t)$ globally and asymptotically converge to zero.
\end{theorem}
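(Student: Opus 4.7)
The plan is to reduce the closed-loop to a second-order saturated error system, propose a Lyapunov function combining the saturation-integral $\Phi$ with an $\L_{ff}$-weighted kinetic term, and conclude by LaSalle's invariance principle. Since $v_\ell$ is constant, $v_f^*=-\L_{ff}^{-1}\L_{f\ell}v_\ell$ is constant and $\dot{\delta}_v=\dot{v}_f=u$; the algebra used in \eqref{eq_controlLaw_ConstantVelocity_matrixForm} then identifies the argument of the saturation as $-k_p\L_{ff}\delta_p-k_v\L_{ff}\delta_v$, so the error dynamics reduce to
\begin{align*}
\dot{\delta}_p=\delta_v,\qquad \dot{\delta}_v=\sat\left(-k_p\L_{ff}\delta_p-k_v\L_{ff}\delta_v\right).
\end{align*}
Introducing the auxiliary variable $q\triangleq k_p\delta_p+k_v\delta_v$ collapses the saturation argument to $-\L_{ff}q$, and oddness of $\sat(\cdot)$ gives $\dot{\delta}_v=-\sat(\L_{ff}q)$.

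As the Lyapunov candidate I would propose
\begin{align*}
V=\one^\T\Phi(\L_{ff}q)+\frac{k_p}{2}\delta_v^\T\L_{ff}\delta_v.
\end{align*}
Under the uniqueness assumption $\L_{ff}$ is symmetric positive definite, and Lemma~\ref{lemma_properties_PhiFunction}(a) gives $\one^\T\Phi(\cdot)\ge0$ with equality only at the origin, so both terms are non-negative and $V=0$ forces $\delta_v=0$ together with $\L_{ff}q=0$, hence $\delta_p=0$. $V$ is also radially unbounded: the quadratic term controls $\|\delta_v\|$, while each component of $\Phi(\cdot)$ is coercive for both admissible saturations (growing like $\beta|x|-\beta^2/2$ for the piecewise-linear choice and like $\beta\log\cosh(x)$ for the $\tanh$ choice), so the $\Phi$-term controls $\|\delta_p\|$ once $\delta_v$ is bounded and $\|\L_{ff}q\|\to\infty$.

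Differentiating $V$ via Lemma~\ref{lemma_properties_PhiFunction}(b) and substituting $\dot{\delta}_v=-\sat(\L_{ff}q)$ yields
\begin{align*}
\dot{V}&=\dot{q}^\T\L_{ff}\sat(\L_{ff}q)+k_p\delta_v^\T\L_{ff}\dot{\delta}_v\\
&=\bigl(k_p\delta_v-k_v\sat(\L_{ff}q)\bigr)^\T\L_{ff}\sat(\L_{ff}q)-k_p\delta_v^\T\L_{ff}\sat(\L_{ff}q)\\
&=-k_v\,\sat(\L_{ff}q)^\T\L_{ff}\sat(\L_{ff}q)\le0,
\end{align*}
the two cross terms cancelling exactly. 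Positive definiteness of $\L_{ff}$ together with the fact that $\sat(x)=0$ only at $x=0$ for both admissible $\sat$ functions then forces $\dot{V}=0$ to imply $q=0$; on any trajectory that stays in this set, $\dot{\delta}_v=0$ and $\dot{q}=k_p\delta_v$, so invariance gives $\delta_v=0$ and then $\delta_p=0$. LaSalle's invariance principle combined with the radial unboundedness of $V$ delivers the claimed global asymptotic convergence of $\delta_p$ and $\delta_v$ to zero.

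I expect the main obstacle to be the choice of the Lyapunov function. The saturation acts on the compound argument $-k_p\L_{ff}\delta_p-k_v\L_{ff}\delta_v$, so a naive energy-like candidate such as $\tfrac12\|\delta_v\|^2+\tfrac{k_p}{2}\delta_p^\T\L_{ff}\delta_p$ cannot produce a clean sign-definite derivative because one cannot split the $\sat$ across the two summands. Absorbing the full argument into $q$, passing it through $\Phi$, and weighting the kinetic quadratic by $\L_{ff}$ rather than by the identity is precisely what makes the cross term $k_p\delta_v^\T\L_{ff}\sat(\L_{ff}q)$ appear with opposite signs in the two time-derivatives and cancel.
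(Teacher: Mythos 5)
Your proof is correct, and it takes a genuinely different route from the paper's. Both arguments start from the same reduced error dynamics $\dot{\delta}_p=\delta_v$, $\dot{\delta}_v=\sat(-k_p\L_{ff}\delta_p-k_v\L_{ff}\delta_v)$, use the integral function $\Phi$, and finish with the invariance principle, but the Lyapunov constructions differ. The paper uses the three-term function $V=\one^\T\Phi(-k_p\L_{ff}\delta_p-k_v\L_{ff}\delta_v)+\one^\T\Phi(-k_p\L_{ff}\delta_p)+k_p\delta_v^\T\L_{ff}\delta_v$, whose derivative contains, besides $-k_v\dot{\delta}_v^\T\L_{ff}\dot{\delta}_v$, a cross term $k_p\delta_v^\T\L_{ff}[\sat(-k_p\L_{ff}\delta_p-k_v\L_{ff}\delta_v)-\sat(-k_p\L_{ff}\delta_p)]$ that is only shown nonpositive via the monotonicity inequality of Lemma~\ref{lemma_satFunction_property}, and the invariant-set analysis must then untangle the two separate conditions under which $\dot{V}=0$. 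Your composite variable $q=k_p\delta_p+k_v\delta_v$ collapses the saturation argument, and the two-term function $V=\one^\T\Phi(\L_{ff}q)+\tfrac{k_p}{2}\delta_v^\T\L_{ff}\delta_v$ produces the exact identity $\dot{V}=-k_v\sat(\L_{ff}q)^\T\L_{ff}\sat(\L_{ff}q)$ with no inequality bookkeeping, so Lemma~\ref{lemma_satFunction_property} is not needed at all, and the invariant set is immediately $\{q=0\}$, from which invariance gives $\delta_v\equiv0$ and hence $\delta_p=0$. Your version buys three things: a shorter derivative computation; independence from strict monotonicity of $\sat$ (you only need $\sat(x)=0\Leftrightarrow x=0$, which both admissible saturations satisfy, whereas the equality case of Lemma~\ref{lemma_satFunction_property} requires strict monotonicity, a point that is delicate for the piecewise-linear choice); and an explicit radial-unboundedness check, which the paper omits before invoking the invariance principle even though it is needed for the global claim. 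What the paper's template buys is that it transfers essentially verbatim to the time-varying-velocity case (Theorem~\ref{theorem_saturationControlVaryingVelocity}), where the same three-term $V$ reappears in the variable $\varepsilon=\L_{ff}\delta_p$ --- though your $q$-trick would adapt there as well. Two small points you use implicitly and should state: oddness of $\sat$ (true for both admissible choices), and symmetry plus positive definiteness of $\L_{ff}$ (Lemma~2 together with Assumption~1) when you move $\L_{ff}$ across inner products and when you conclude $\L_{ff}q=0\Rightarrow q=0$.
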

\begin{proof}
The matrix-vector form of control law \eqref{eq_controlLaw_saturation_constantVelocity} is $\dot{v}_f=\sat(-k_p\L_{ff}\delta_p-k_v\L_{ff}\delta_v)$.
Substituting $\dot{v}_f$ and $\dot{v}_\ell=0$ into the tracking error dynamics gives $\dot{\delta}_v=\dot{v}_f+\L_{ff}^{-1}\L_{f\ell}\dot{v}_\ell=\sat(-k_p\L_{ff}\delta_p-k_v\L_{ff}\delta_v)$.
Consider the Lyapunov function
\begin{align*}
V&=\one^T\Phi(-k_p\L_{ff}\delta_p-k_v\L_{ff}\delta_v) + \one^T\Phi(-k_p\L_{ff}\delta_p)\\
&\qquad+k_p\delta_v^T\L_{ff}\delta_v.
\end{align*}
It follows from Lemma~\ref{lemma_properties_PhiFunction}(a) that $V\ge0$ and $V=0$ if and only if $\delta_p=\delta_v=0$.
According to Lemma~\ref{lemma_properties_PhiFunction}(b), the time derivative of the Lyapunov function is
\begin{align*}
\dot{V}
&=(-k_p\L_{ff}\delta_v-k_v\L_{ff}\dot{\delta}_v)^T\sat(-k_p\L_{ff}\delta_p-k_v\L_{ff}\delta_v) \\
&\quad+ (-k_p\L_{ff}\delta_v)^T\sat(-k_p\L_{ff}\delta_p)+2k_p\delta_v^T\L_{ff}\dot{\delta}_v.
\end{align*}
It follows from $\sat(-k_p\L_{ff}\delta_p-k_v\L_{ff}\delta_v)=\dot{\delta}_v$ that
\begin{align*}
\dot{V}
&=-(k_p\L_{ff}\delta_v)^T\dot{\delta}_v -(k_v\L_{ff}\dot{\delta}_v)^T\dot{\delta}_v \\
&\quad+ (-k_p\L_{ff}\delta_v)^T\sat(-k_p\L_{ff}\delta_p)+2k_p\delta_v^T\L_{ff}\dot{\delta}_v\\
&=-k_v\dot{\delta}_v^T\L_{ff}\dot{\delta}_v -(k_p\L_{ff}\delta_v)^T\sat(-k_p\L_{ff}\delta_p)+k_p\delta_v^T\L_{ff}\dot{\delta}_v\\
&=-k_v\dot{\delta}_v^T\L_{ff}\dot{\delta}_v \\ &\,+k_p\delta_v^T\L_{ff}[\sat(-k_p\L_{ff}\delta_p-k_v\L_{ff}\delta_v)-\sat(-k_p\L_{ff}\delta_p)],
\end{align*}
where the first term $-k_v\dot{\delta}_v^T\L_{ff}\dot{\delta}_v$ is nonpositive and the second term is also nonpositive according to Lemma~\ref{lemma_satFunction_property}.
As a result, $\dot{V}\le 0$ for all $t\ge0$.

We next identify the invariant set for $\dot{V}=0$.
When $\dot{V}=0$, we have
\begin{align}
-k_v\dot{\delta}_v^T\L_{ff}\dot{\delta}_v&=0, \label{eq_dotV=0_term1}\\
k_p\delta_v^T\L_{ff}[\dot{\delta}_v-\sat(-k_p\L_{ff}\delta_p)]&=0. \label{eq_dotV=0_term2}
\end{align}
It follows from \eqref{eq_dotV=0_term1} that $\dot{\delta}_v=\sat(-k_p\L_{ff}\delta_p-k_v\L_{ff}\delta_v)=0$, which further implies $k_p\L_{ff}\delta_p=-k_v\L_{ff}\delta_v$.
It then follows from \eqref{eq_dotV=0_term2} that $k_p\delta_v^T\L_{ff}\sat(-k_v\L_{ff}\delta_v)=0$, which indicates $\L_{ff}\delta_v=0\Leftrightarrow \delta_v=0$ because $\L_{ff}\delta_v$ and $\sat(\L_{ff}\delta_v)$ have the same sign componentwise.
Since $k_p\L_{ff}\delta_p=-k_v\L_{ff}\delta_v$, we have $\delta_p=0$.
Therefore, $\dot{V}=0$ if and only if $\delta_p=\delta_v=0$. According to the invariance principle, the tracking errors $\delta_p$ and $\delta_v$ globally and asymptotically converges to zero.
\end{proof}

In order to handle input saturation in the case of time-varying ${v}_\ell(t)$, we use the control law
\begin{align}\label{eq_controlLaw_saturation_varyingVelocity}
u_i&=K_i^{-1}\sat\left\{-\sum_{j\in\N_i}P_{g_{ij}^*}\left[k_p(p_i-p_j)+k_v(v_i-v_j)\right]\right\}\nonumber\\
&\qquad+K_i^{-1}\sum_{j\in\N_i}P_{g_{ij}^*}\dot{v}_j,
\end{align}
where $K_i=\sum_{j\in\N_i}P_{g_{ij}^*}$.
Although the saturation function is not applied to the entire acceleration input, the above control law ensures bounded input given arbitrary initial conditions.
In particular, under control law \eqref{eq_controlLaw_saturation_varyingVelocity} the velocity dynamics are
$\sum_{j\in\N_i}P_{g_{ij}^*}(\dot{v}_i-\dot{v}_j)=\sat(\star)$, where the quantity in the saturation function is written in short as $\sat(\star)$.
Then the matrix-vector form of the velocity dynamics is $\L_{ff}\dot{v}_f+\L_{f\ell}\dot{v}_\ell=\sat(\star)$, which implies $\dot{v}_f=\L_{ff}^{-1}\sat(\star)-\L_{ff}^{-1}\L_{f\ell}\dot{v}_\ell$.
It follows that
\begin{align*}
\|\dot{v}_f\|_\infty\le\|\L_{ff}^{-1}\|_\infty\|\sat(\star)\|_\infty+\|\L_{ff}^{-1}\L_{f\ell}\|_\infty\|\dot{v}_\ell\|_\infty.
\end{align*}
The upper bound for the acceleration as shown above is independent to the initial conditions of the formation position or velocity.
It relies on the rigidity structure of the target formation and the magnitude of the accelerations of the leaders.
We next characterize the global formation stability under control law \eqref{eq_controlLaw_saturation_varyingVelocity}.

\begin{theorem}\label{theorem_saturationControlVaryingVelocity}
Under control law \eqref{eq_controlLaw_saturation_varyingVelocity} and for any time-varying leader velocity $v_\ell(t)$, the tracking errors $\delta_p(t)$ and $\delta_v(t)$ globally and asymptotically converge to zero.
\end{theorem}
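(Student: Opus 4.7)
The plan is to show that the feed-forward term in \eqref{eq_controlLaw_saturation_varyingVelocity} cancels the leader-acceleration dependence, reducing the closed loop to an autonomous error system, and then adapt the Lyapunov analysis of Theorem~\ref{theorem_ControlSaturationConstant}. Multiplying both sides of \eqref{eq_controlLaw_saturation_varyingVelocity} by $K_i$ and stacking the equations over $i\in\V_f$ gives, in matrix--vector form, $\L_{ff}\dot v_f+\L_{f\ell}\dot v_\ell=\sat(-k_p\L_{ff}\delta_p-k_v\L_{ff}\delta_v)$, as already noted in the paragraph preceding the theorem. Time-differentiating $v_f^*=-\L_{ff}^{-1}\L_{f\ell}v_\ell$ from \eqref{eq_pfstarExpression} yields $\L_{ff}\dot v_f^*=-\L_{f\ell}\dot v_\ell$, so subtracting gives
\begin{align*}
\dot\delta_p = \delta_v, \qquad \L_{ff}\dot\delta_v = \sat(-k_p\L_{ff}\delta_p - k_v\L_{ff}\delta_v),
\end{align*}
in which $\dot v_\ell$ has been completely eliminated.

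Since the saturated quantity now appears premultiplied by $\L_{ff}$, I propose to replace the $k_p\delta_v^T\L_{ff}\delta_v$ term in the Lyapunov function from Theorem~\ref{theorem_ControlSaturationConstant} by $k_p\delta_v^T\L_{ff}^2\delta_v$, i.e.,
\begin{align*}
V = \one^T\Phi(-k_p\L_{ff}\delta_p - k_v\L_{ff}\delta_v) + \one^T\Phi(-k_p\L_{ff}\delta_p) + k_p\delta_v^T\L_{ff}^2\delta_v.
\end{align*}
By Lemma~\ref{lemma_properties_PhiFunction}(a) together with nonsingularity of $\L_{ff}$, $V$ is positive definite in $(\delta_p,\delta_v)$. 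Differentiating term by term via Lemma~\ref{lemma_properties_PhiFunction}(b) and substituting $\L_{ff}\dot\delta_v=\sat(-k_p\L_{ff}\delta_p-k_v\L_{ff}\delta_v)$, the extra factor of $\L_{ff}$ in the quadratic is exactly what makes $2k_p\delta_v^T\L_{ff}^2\dot\delta_v=2k_p\delta_v^T\L_{ff}\sat(\cdot)$, so that the contributions from $\one^T\Phi$ telescope into
\begin{align*}
\dot V = -k_v\|\sat(-k_p\L_{ff}\delta_p - k_v\L_{ff}\delta_v)\|^2 + k_p\delta_v^T\L_{ff}\bigl[\sat(-k_p\L_{ff}\delta_p - k_v\L_{ff}\delta_v) - \sat(-k_p\L_{ff}\delta_p)\bigr],
\end{align*}
whose second summand is nonpositive by Lemma~\ref{lemma_satFunction_property} with $x=-k_p\L_{ff}\delta_p$ and $y=k_v\L_{ff}\delta_v$.

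To finish, I would apply LaSalle's invariance principle to the autonomous error system. On $\{\dot V=0\}$ the squared term forces $\sat(-k_p\L_{ff}\delta_p-k_v\L_{ff}\delta_v)=0$, hence $k_p\L_{ff}\delta_p=-k_v\L_{ff}\delta_v$; substituting into the bracketed expression yields $(\L_{ff}\delta_v)^T\sat(k_v\L_{ff}\delta_v)=0$, and by the componentwise sign argument used at the end of the proof of Theorem~\ref{theorem_ControlSaturationConstant} this forces $\L_{ff}\delta_v=0$, hence $\delta_v=0$ and then $\delta_p=0$ by nonsingularity of $\L_{ff}$. The main obstacle I anticipate is identifying the correct Lyapunov modification: the replacement of $\L_{ff}$ by $\L_{ff}^2$ is tailored to absorb the $\L_{ff}^{-1}$ implicit in $\dot\delta_v$, and is not obvious a priori. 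A secondary concern is the nondifferentiability of $V$ at the saturation boundary when $\sat(x)=\sign(x)\min\{|x|,\beta\}$, which may require restriction to the smooth $\tanh$ form or a nonsmooth version of LaSalle based on Clarke's generalized gradient.
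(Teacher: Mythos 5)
Your proof is correct and is essentially the paper's own argument written in different coordinates: the paper introduces $\varepsilon_i=\sum_{j\in\N_i}P_{g_{ij}^*}(p_i-p_j)$, so that $\varepsilon=\L_{ff}\delta_p$, $\dot\varepsilon=\L_{ff}\delta_v$, the closed loop becomes $\ddot\varepsilon=\sat(-k_p\varepsilon-k_v\dot\varepsilon)$, and its Lyapunov function $\one^T\Phi(-k_p\varepsilon-k_v\dot\varepsilon)+\one^T\Phi(-k_p\varepsilon)+k_p\dot\varepsilon^T\dot\varepsilon$ coincides with your $V$ because $\delta_v^T\L_{ff}^2\delta_v=\dot\varepsilon^T\dot\varepsilon$, after which the same invariance argument is applied. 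Your closing concern is unneeded: $\Phi$ is an antiderivative of the continuous function $\sat(\cdot)$ and hence $C^1$ even for the piecewise-linear saturation, so the standard smooth invariance principle suffices, exactly as in the paper.
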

\begin{proof}
Let $\varepsilon_i\triangleq\sum_{j\in\N_i}P_{g_{ij}^*}(p_i-p_j)$.
It follows from \eqref{eq_controlLaw_saturation_varyingVelocity} that $\dot{\varepsilon}_i=\sat(-k_p\varepsilon_i-k_v\dot{\varepsilon}_i)$.
By denoting $\varepsilon=[\varepsilon_1,\dots,\varepsilon_{n_f}]^T$, we obtain
\begin{align*}
    \dot{\varepsilon}=\sat(-k_p\varepsilon-k_v\dot{\varepsilon}).
\end{align*}
Note $\varepsilon=\L_{ff}p_f+\L_{f\ell}p_\ell=\L_{ff}\delta_p$ and hence $\dot{\varepsilon}=\L_{ff}\delta_v$.
As a result, $\varepsilon=\dot{\varepsilon}=0\Leftrightarrow \delta_p=\delta_v=0$.
We prove $\delta_p, \delta_v\rightarrow0$ by showing $\varepsilon,\dot{\varepsilon}\rightarrow0$.
To that end, consider the Lyapunov function
\begin{align*}
V=\one^T\Phi(-k_p\varepsilon-k_v\dot{\varepsilon})+\one^T\Phi(-k_p\varepsilon)+k_p\dot{\varepsilon}^T\dot{\varepsilon}.
\end{align*}
The time derivative of $V$ is given by
\begin{align*}
\dot{V}
&=(-k_p\dot{\varepsilon}-k_v\ddot{\varepsilon})\sat(-k_p\varepsilon-k_v\dot{\varepsilon})\\
&\qquad+(-k_p\dot{\varepsilon})\sat(-k_p\varepsilon)+2k_p\dot{\varepsilon}^T\ddot{\varepsilon}\\
&=-k_v\ddot{\varepsilon}^T\ddot{\varepsilon}+k_p\dot{\varepsilon}^T[\sat(k_p\varepsilon)-sat(k_p\varepsilon+k_v\dot{\varepsilon})].
\end{align*}
Similar to the proof of Theorem~\ref{theorem_ControlSaturationConstant}, it can be shown that $\dot{V}\le0$ and the invariant set where $\dot{V}=0$ is $\varepsilon=\dot{\varepsilon}=0$.
Therefore, by the invariance principle, $\varepsilon$ and $\dot{\varepsilon}$ globally and asymptotically converge to zero, and so do $\delta_p$ and $\delta_v$.
\end{proof}

\subsection{A Collision-Free Condition}

Collision avoidance among the agents is an important issue in practical formation control problems.
The proposed control laws can be implemented together with, for example, artificial potentials \cite{Han2015MSC} to ensure collision avoidance.
In this work, we propose a sufficient condition on the initial formation that ensures no collision between any pair of agents (even they are not neighbors).
Suppose $\gamma$ is the desired minimum distance that should be guaranteed between any two agents and $\gamma$ satisfies
\begin{align*}
0\le \gamma<\min_{i,j\in\V, t\ge0}\|p_i^*(t)-p_j^*(t)\|.
\end{align*}

\begin{theorem}\label{theorem_collisionAvoidanceCondForDelta}
Under control law \eqref{eq_controlLaw_ConstantVelocity}, for any constant leader velocity $v_\ell$, it is guaranteed that
\begin{align*}
\|p_i(t)-p_j(t)\|>\gamma, \quad \forall i,j\in\V, \forall t\ge0,
\end{align*}
if $\delta_p(0)$ and $\delta_v(0)$ satisfy
\begin{align}\label{eq_collisionAvoidanceCondition_constantVelocity}
&{k_p\delta_p^T(0)\L_{ff}\delta_p(0)+\delta_v^T(0)\delta_v(0)} \nonumber\\
&\quad<{\frac{k_p\lambda_{\min}(\L_{ff})}{n_f}}\left(\min_{i,j\in\V,t\ge0}\|p_i^*(t)-p_j^*(t)\|-\gamma\right)^2.
\end{align}
\end{theorem}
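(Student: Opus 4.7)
The plan is to combine a standard quadratic Lyapunov argument for the closed-loop error dynamics with a careful triangle-inequality bookkeeping that uses only the fact that the leader tracking errors vanish. The Lyapunov step will yield a uniform sublevel-set bound on $\|\delta_p(t)\|$, and the triangle-inequality step will translate that bound into a lower bound on every inter-agent distance; matching the two will recover the hypothesis \eqref{eq_collisionAvoidanceCondition_constantVelocity} exactly.

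First I would exploit that $\dot{v}_\ell\equiv 0$ so the error dynamics reduce to the autonomous linear system $\dot\delta_p=\delta_v$, $\dot\delta_v=-k_p\L_{ff}\delta_p-k_v\L_{ff}\delta_v$ from \eqref{eq_errorDynamics_constantVelocity_matrixForm}. I would try the natural energy-like candidate $V(t)=k_p\delta_p^T\L_{ff}\delta_p+\delta_v^T\delta_v$, which is nonnegative since $\L_{ff}\succeq 0$ by Lemma~2. A direct differentiation, using the symmetry of $\L_{ff}$, should collapse the cross terms and leave $\dot V=-2k_v\delta_v^T\L_{ff}\delta_v\le 0$, so $V(t)\le V(0)$ for all $t\ge 0$. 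Because $\L_{ff}$ is nonsingular by Assumption~1, I can then sandwich
\begin{align*}
k_p\lambda_{\min}(\L_{ff})\|\delta_p(t)\|^2\le k_p\delta_p^T(t)\L_{ff}\delta_p(t)\le V(t)\le V(0),
\end{align*}
and plugging in the assumption \eqref{eq_collisionAvoidanceCondition_constantVelocity} yields $\|\delta_p(t)\|<(d^*-\gamma)/\sqrt{n_f}$ where $d^*\triangleq\min_{i,j,t}\|p_i^*(t)-p_j^*(t)\|$.

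The second step converts this into a lower bound on inter-agent distances. For any pair $i,j\in\V$, the triangle inequality gives $\|p_i-p_j\|\ge\|p_i^*-p_j^*\|-\|p_i-p_i^*\|-\|p_j-p_j^*\|$. By Definition~1 each leader contributes $\|p_k-p_k^*\|=0$, so the two error terms are bounded by $\sum_{k\in\V_f}\|p_k-p_k^*\|$. A single Cauchy-Schwarz on the all-ones vector of length $n_f$ then yields $\sum_{k\in\V_f}\|p_k-p_k^*\|\le\sqrt{n_f}\|\delta_p\|$; this is precisely the step that injects the $\sqrt{n_f}$ factor complementing the $1/n_f$ in the hypothesis. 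Chaining gives $\|p_i(t)-p_j(t)\|>d^*-\sqrt{n_f}\cdot(d^*-\gamma)/\sqrt{n_f}=\gamma$, as required.

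The main obstacle is conceptual rather than computational: selecting the Lyapunov function so that its sublevel sets produce a bound with the precise coefficient $k_p\lambda_{\min}(\L_{ff})$ on the right-hand side of \eqref{eq_collisionAvoidanceCondition_constantVelocity}, and simultaneously choosing the right upper bound on $\|p_i-p_i^*\|+\|p_j-p_j^*\|$ so that the $n_f$ scaling lines up. The quadratic form $k_p\delta_p^T\L_{ff}\delta_p+\delta_v^T\delta_v$ is the natural matching of ``potential'' to $\L_{ff}$-weighted dynamics and makes the cross terms cancel, while the cruder (but matching) Cauchy-Schwarz over all $n_f$ followers, rather than the tighter $\sqrt{2}\|\delta_p\|$ bound that uses only indices $i$ and $j$, is what makes the $n_f$ denominator appear. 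Once these two choices are identified, the remainder of the argument is routine.
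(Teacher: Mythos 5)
Your proposal is correct and follows essentially the same route as the paper's own proof: the identical Lyapunov function $V=k_p\delta_p^T\L_{ff}\delta_p+\delta_v^T\delta_v$ with $\dot V=-2k_v\delta_v^T\L_{ff}\delta_v\le 0$ yielding the sublevel-set bound $\|\delta_p(t)\|\le\sqrt{V(0)/(k_p\lambda_{\min}(\L_{ff}))}$, combined with the same triangle-inequality and Cauchy--Schwarz chain $\|p_i-p_j\|\ge\|p_i^*-p_j^*\|-\sqrt{n_f}\,\|\delta_p\|$. No gaps; the two arguments match step for step.
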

\begin{proof}
For any $i,j\in\V$, it always holds that
\begin{align*}
&p_i(t)-p_j(t)\\
&\quad\equiv[p_i^*(t)-p_j^*(t)]+[p_i(t)-p_i^*(t)]-[p_j(t)-p_j^*(t)],
\end{align*}
where $p_i^*(t)$ and $p_j^*(t)$ are the expected positions for agents $i$ and $j$ in the target formation.
Note $p_i(t)-p_i^*(t)\equiv0$ for $i\in\V_\ell$.
It follows that
\begin{align}\label{eq_interDistanceInequality}
    &\|p_i(t)-p_j(t)\|\nonumber\\
    &\quad\ge \|p_i^*(t)-p_j^*(t)\|-\|p_i(t)-p_i^*(t)\|-\|p_j(t)-p_j^*(t)\| \nonumber\\
    &\quad=\|p_i^*(t)-p_j^*(t)\|-\sum_{k\in\V_f} \|p_k(t)-p_k^*(t)\|\nonumber\\
    &\quad\ge\|p_i^*(t)-p_j^*(t)\|-\sqrt{n_f}\|p_f(t)-p_f^*(t)\| \nonumber\\
    &\quad=\|p_i^*(t)-p_j^*(t)\|-\sqrt{n_f}\|\delta_p(t)\|,    \quad \forall t\ge0.
\end{align}
The above inequality gives a lower bound for $\|p_i(t)-p_j(t)\|$. If we can find a condition such that the lower bound is always greater than $\gamma$, then the minimum distance $\gamma$ can be guaranteed.
In this direction, consider the Lyapunov function
\begin{align*}
V(\delta_p(t),\delta_v(t))=k_p\delta_p^T(t)\L_{ff}\delta_p(t)+\delta_v^T(t)\delta_v(t).
\end{align*}
With the error dynamics as given in \eqref{eq_errorDynamics_constantVelocity_matrixForm}, the time derivative of $V$ along the error dynamics is $\dot{V}=-2k_v\delta_v^T\L_{ff}\delta_v\le0$.
As a result, we have
\begin{align*}
k_p\lambda_{\min} (\L_{ff})\|\delta_p(t)\|^2
&\le k_p\delta_p^T(t)\L_{ff}\delta_p(t) \\
&\le k_p\delta_p^T(t)\L_{ff}\delta_p(t)+\delta_v^T(t)\delta_v(t) \\
&\le V(\delta_p(0),\delta_v(0)),
\end{align*}
which implies
\begin{align}\label{eq_deltapnormbound}
\|\delta_p(t)\|\le \sqrt{\frac{V(\delta_p(0),\delta_v(0))}{k_p\lambda_{\min}(\L_{ff})}}.
\end{align}
By combining \eqref{eq_interDistanceInequality} and \eqref{eq_deltapnormbound}, we have that $\|p_i(t)-p_j(t)\|>\gamma$ for all $t\ge0$ and all $i,j\in\V$ if $\delta_p(0)$ and $\delta_v(0)$ satisfies
\begin{align*}
\min_{i,j\in\V, t\ge0}\|p_i^*(t)-p_j^*(t)\|-\sqrt{\frac{n_fV(\delta_p(0),\delta_v(0))}{k_p\lambda_{\min}(\L_{ff})}} >\gamma,
\end{align*}
which can be rewritten as \eqref{eq_collisionAvoidanceCondition_constantVelocity}.
\end{proof}

The intuition behind the condition in Theorem~\ref{theorem_collisionAvoidanceCondForDelta} is that collision avoidance is guaranteed if the initial formation is sufficiently close to the target formation.
Theorem~\ref{theorem_collisionAvoidanceCondForDelta} is merely applicable in the case of constant $v_\ell(t)$.
For time-varying $v_\ell(t)$, we have a similar condition for control law \eqref{eq_controlLaw_varyingVelocity}.
The proof is similar to Theorem~\ref{theorem_collisionAvoidanceCondForDelta} and omitted.

\begin{theorem}\label{theorem_collisionAvoidanceCondForDelta_withAcc}
Under control law \eqref{eq_controlLaw_varyingVelocity}, for any time-varying leader velocity $v_\ell(t)$, it can be guaranteed that
\begin{align*}
\|p_i(t)-p_j(t)\|>\gamma, \quad \forall i,j\in\V, \forall t\ge0,
\end{align*}
if $\delta_p(0)$ and $\delta_v(0)$ satisfy
\begin{align*}
&{k_p\delta_p^T(0)\delta_p(0)+\delta_v^T(0)\delta_v(0)}\nonumber \\
&\quad <{\frac{k_p}{n_f}}\left(\min_{i,j\in\V,t\ge0}\|p_i^*(t)-p_j^*(t)\|-\gamma\right)^2.
\end{align*}
\end{theorem}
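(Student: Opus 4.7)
The plan is to mirror the argument used in Theorem~\ref{theorem_collisionAvoidanceCondForDelta}, exploiting the fact that under control law \eqref{eq_controlLaw_varyingVelocity} the error dynamics simplify to the $\L_{ff}$-free form in \eqref{eq_errorDynamics_varyingVelocity_matrixForm}. The first step is the same geometric/triangle-inequality bound: for any $i,j\in\V$, write $p_i-p_j = (p_i^*-p_j^*)+(p_i-p_i^*)-(p_j-p_j^*)$, use $p_k-p_k^*\equiv 0$ for leaders, and bound the follower-side deviations by $\sum_{k\in\V_f}\|p_k-p_k^*\|\le\sqrt{n_f}\|\delta_p(t)\|$, yielding
\begin{align*}
\|p_i(t)-p_j(t)\|\ge \|p_i^*(t)-p_j^*(t)\|-\sqrt{n_f}\,\|\delta_p(t)\|.
\end{align*}
So the task reduces to producing a time-uniform upper bound on $\|\delta_p(t)\|$ independent of $\dot{v}_\ell(t)$.

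The second step is to pick the Lyapunov function that matches the simplified error dynamics. Since \eqref{eq_errorDynamics_varyingVelocity_matrixForm} reads $\dot{\delta}_p=\delta_v$, $\dot{\delta}_v=-k_p\delta_p-k_v\delta_v$ with no $\L_{ff}$ weighting, the natural candidate is
\begin{align*}
V(\delta_p,\delta_v)=k_p\delta_p^T\delta_p+\delta_v^T\delta_v,
\end{align*}
rather than the $\L_{ff}$-weighted version used in Theorem~\ref{theorem_collisionAvoidanceCondForDelta}. A direct differentiation along the error dynamics gives $\dot{V}=2k_p\delta_p^T\delta_v+2\delta_v^T(-k_p\delta_p-k_v\delta_v)=-2k_v\|\delta_v\|^2\le 0$, so $V(t)\le V(0)$ for all $t\ge 0$. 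Note that precisely because the acceleration feedback in \eqref{eq_controlLaw_varyingVelocity} cancels the $\dot{v}_\ell$ term in the error dynamics, no assumption on leader acceleration is needed here; this is the key structural reason the theorem holds for time-varying $v_\ell(t)$.

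From monotonicity of $V$, $k_p\|\delta_p(t)\|^2\le V(0)$, hence $\|\delta_p(t)\|\le\sqrt{V(0)/k_p}$. Substituting into the triangle-inequality bound, $\|p_i(t)-p_j(t)\|>\gamma$ for all $t\ge 0$ and all $i,j\in\V$ whenever
\begin{align*}
\min_{i,j\in\V,\,t\ge 0}\|p_i^*(t)-p_j^*(t)\|-\sqrt{\frac{n_f\,V(\delta_p(0),\delta_v(0))}{k_p}}>\gamma,
\end{align*}
which, after squaring and substituting $V(0)=k_p\delta_p^T(0)\delta_p(0)+\delta_v^T(0)\delta_v(0)$, is exactly the hypothesized initial-condition bound.

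I do not expect any serious obstacle; the main subtlety is simply recognizing that the change in error dynamics from \eqref{eq_errorDynamics_constantVelocity_matrixForm} to \eqref{eq_errorDynamics_varyingVelocity_matrixForm} forces the Lyapunov function to drop the $\L_{ff}$ factor, which in turn replaces the $k_p\lambda_{\min}(\L_{ff})$ coefficient in Theorem~\ref{theorem_collisionAvoidanceCondForDelta} by $k_p$ in the present bound, and that the acceleration-feedback cancellation is what keeps $\dot{V}\le 0$ without any restriction on $\dot{v}_\ell$.
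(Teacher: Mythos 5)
Your proof is correct and is precisely the argument the paper intends: the paper omits this proof, stating it is ``similar to Theorem~\ref{theorem_collisionAvoidanceCondForDelta},'' and your adaptation---reusing the triangle-inequality bound $\|p_i-p_j\|\ge\|p_i^*-p_j^*\|-\sqrt{n_f}\|\delta_p\|$ while switching to the unweighted Lyapunov function $V=k_p\delta_p^T\delta_p+\delta_v^T\delta_v$ to match the $\L_{ff}$-free error dynamics \eqref{eq_errorDynamics_varyingVelocity_matrixForm}---is exactly that modification, including the replacement of $k_p\lambda_{\min}(\L_{ff})$ by $k_p$. You also correctly identify why no condition on $\dot{v}_\ell$ is needed (the acceleration feedback cancels it from the error dynamics), so the argument is complete as written.
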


The sufficient conditions given in Theorems~\ref{theorem_collisionAvoidanceCondForDelta} and \ref{theorem_collisionAvoidanceCondForDelta_withAcc} are likely conservative in practice.
For example, in the simulation example shown in Figure~\ref{fig_sim_2DTransConstant}, no two agents collide during the formation evolution even though the inequality \eqref{eq_collisionAvoidanceCondition_constantVelocity} does not hold.
Specifically, the left-hand side of \eqref{eq_collisionAvoidanceCondition_constantVelocity} equals 325.88, whereas the right-hand side with $\gamma=0$ equals 14.53.

\section{Conclusions}\label{section_conclusion}
This work proposed and analyzed a bearing-based approach to the problem of translational and scaling formation maneuver control in arbitrary dimensional spaces.
We proposed a variety of bearing-based formation control laws and analyzed their global formation stability.
There are several important directions for future research.
For example, in this work we assume that the information flow between any two followers is bidirectional.
In the directional case, a new notion termed \emph{bearing persistence} emerges and plays an important role in the formation stability analysis \cite{zhao2015CDC}.
Secondly, although the double-integrator dynamics can approximately model some practical physical systems, more complicated models such as nonholonomic models should be considered in the future.

\appendix

\subsection{Preliminaries to Bearing Rigidity Theory}

Some basic concepts and results in the bearing rigidity theory are revisited here. Details can be found in \cite{zhao2014TACBearing}.
For a formation $\G(p)$ with undirected graph $\G$, assign a direction to each edge in $\G$ to obtain an oriented graph.
Express the edge vector and the bearing for the $k$th directed edge in the oriented graph, respectively, as $e_{k}$ and $g_{k}\triangleq {e_{k}}/{\|e_{k}\|}$ for $k\in\{1,\dots,m\}$ where $m=|\E|$.
Define the \emph{bearing function} $F_B: \R^{dn}\rightarrow\R^{dm}$ as
$F_B(p)\triangleq [g_1^T ,\dots, g_m^T]^T$.
The \emph{bearing rigidity matrix} is defined as the Jacobian of the bearing function,
$R_B(p) \triangleq {\partial F_B(p)}/{\partial p}\in\R^{dm\times dn}$.
The bearing rigidity matrix satisfies $\rank(R_B)\le dn-d-1$ and $\myspan\{\one\otimes I_d, p\}\subseteq \Null(R_B)$ \cite{zhao2014TACBearing}.
Let $\delta p$ be a variation of $p$.
If $R_B(p)\delta p=0$, then $\delta p$ is called an \emph{infinitesimal bearing motion} of $\G(p)$.
A formation always has two kinds of \emph{trivial} infinitesimal bearing motions: translation and scaling of the entire formation.
A formation is called \emph{infinitesimally bearing rigid} if all the infinitesimal bearing motions are trivial.
The infinitesimal bearing rigidity has the following important properties.

\begin{theorem}[\cite{zhao2014TACBearing}]\label{theorem_IBR_NSCondition}
    The following statements are equivalent:
    \begin{enumerate}[(a)]
    \item $\G(p)$ is {infinitesimally bearing rigid};
    \item $\G(p)$ can be uniquely determined up to a translational and scaling factor by the inter-neighbor bearings $\{g_{ij}\}_{(i,j)\in\E}$;
    \item $\rank(R_B)=dn-d-1$;
    \item $\Null(R_B)=\myspan\{\one_n\otimes I_d, p\}$.
    \end{enumerate}
\end{theorem}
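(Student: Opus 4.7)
The plan is to establish the four-way equivalence via the chain (a) $\Leftrightarrow$ (d) $\Leftrightarrow$ (c), which is essentially linear algebra, and then close the loop with (a) $\Leftrightarrow$ (b), which is the step that carries the real content.

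First I would handle (a) $\Leftrightarrow$ (d). By the stated definition, $\G(p)$ is infinitesimally bearing rigid precisely when every $\delta p$ satisfying $R_B(p)\delta p = 0$ is trivial, i.e., lies in $\myspan\{\one_n \otimes I_d, p\}$. Since the excerpt already records the inclusion $\myspan\{\one_n \otimes I_d, p\} \subseteq \Null(R_B)$, the rigidity condition is equivalent to the reverse inclusion, hence to equality.

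Next, (d) $\Leftrightarrow$ (c) is a dimension count. I would show that $\myspan\{\one_n \otimes I_d, p\}$ has dimension exactly $d+1$: the $d$ vectors $\one_n \otimes e_k$ are linearly independent, and $p$ is independent of them unless all $p_i$ coincide (which is excluded because the bearings $g_{ij}$ are defined only when $p_i \neq p_j$ on each edge). Combined with $\rank(R_B) \le dn - d - 1$, the rank equality in (c) is equivalent to the null-space being as small as possible, i.e., to (d).

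The more delicate step is (a) $\Leftrightarrow$ (b). Here I would exploit the fact, peculiar to bearings, that the constraints $P_{g_{ij}^*}(p_j - p_i) = 0$ are \emph{linear} in the configuration variable. Consequently the set of all configurations $q \in \R^{dn}$ with the same inter-neighbor bearings as $p$ is an affine subspace through $p$, which may be characterized by a matrix of the bearing-Laplacian type; its tangent directions at $p$ are exactly $\Null(R_B(p))$. For (a) $\Rightarrow$ (b), infinitesimal rigidity forces this affine subspace to be just $p + \myspan\{\one_n \otimes I_d, p\}$, so every bearing-equivalent configuration is a translation-and-scaling of $p$. Conversely, for (b) $\Rightarrow$ (a), if $\delta p \in \Null(R_B(p))$, then by the linear (projection) form of the bearing constraints the ray $p + t\,\delta p$ preserves every bearing for all small $t$ (this verification is the one bookkeeping calculation I would actually write out), and (b) then forces $\delta p$ to be trivial.

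The main obstacle will be making the (a) $\Leftrightarrow$ (b) argument airtight, because in distance rigidity the analogous implication fails: infinitesimal rigidity is strictly stronger than global rigidity. The reason it works here is the linearity of the projection constraint $P_{g_{ij}^*}(p_i - p_j) = 0$, which collapses the distinction between the tangent space and the full variety of bearing-equivalent configurations. I would emphasize this point explicitly, since it is what allows the rigidity theory for bearings to be cleaner and fully algebraic compared with its distance counterpart.
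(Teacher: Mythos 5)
The paper does not prove this theorem at all: it is quoted in the appendix as background from \cite{zhao2014TACBearing}, so there is no internal proof to compare against, and your argument must stand on its own. It does, and it follows the same route as the cited reference: (a) $\Leftrightarrow$ (d) is the definition of infinitesimal bearing rigidity combined with the recorded inclusion $\myspan\{\one_n\otimes I_d,p\}\subseteq\Null(R_B)$; (d) $\Leftrightarrow$ (c) is rank--nullity once you observe $p\notin\myspan\{\one_n\otimes I_d\}$, which is guaranteed because every edge requires $p_i\neq p_j$; and (a) $\Leftrightarrow$ (b) rests on the linearity of the projection constraints, which is exactly the feature that makes bearing rigidity cleaner than distance rigidity. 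Your closing remark on why the analogous implication fails for distances is the right emphasis.

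Two points need tightening before the (a) $\Leftrightarrow$ (b) step is airtight. First, the set of configurations with the same bearings as $p$ is \emph{not} the full linear subspace $S=\{q\in\R^{dn}: P_{g_{ij}}(q_i-q_j)=0,\ \forall (i,j)\in\E\}$; it is the relatively open cone inside $S$ cut out by the sign conditions $g_{ij}^T(q_j-q_i)>0$. Your argument survives because you only use the containment (bearing-equivalent configurations lie in $S$) in one direction, and in the other direction the local statement that along a ray $p+t\,\delta p$ with $\delta p\in\Null(R_B)=S$ each edge vector becomes $e_k+t\,\delta e_k$ with $P_{g_k}\delta e_k=0$, hence $\delta e_k=c_kg_k$ and $e_k(t)=(\|e_k\|+tc_k)g_k$, so the sign conditions persist for $|t|$ small and the bearings are exactly preserved --- but the sentence ``is an affine subspace'' is false as stated and should be replaced by this cone description. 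Second, in (a) $\Rightarrow$ (b), after using $\Null(R_B)=\myspan\{\one_n\otimes I_d,p\}$ to conclude $q=\one_n\otimes a+sp$, you should note $s>0$: the case $s<0$ reverses every bearing and $s=0$ collapses the configuration, so only positive scalings survive, which is what statement (b) asserts. With these repairs the chain (a) $\Leftrightarrow$ (d) $\Leftrightarrow$ (c), (a) $\Leftrightarrow$ (b) closes correctly.
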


{\small
\section*{Acknowledgements}
The work presented here has been supported by the Israel Science Foundation (grant No. 1490/13).
}

\bibliography{myOwnPub,zsyReferenceAll} 
\bibliographystyle{ieeetr}

\begin{IEEEbiography}[{\includegraphics[width=1in,height=1.25in,clip,keepaspectratio]{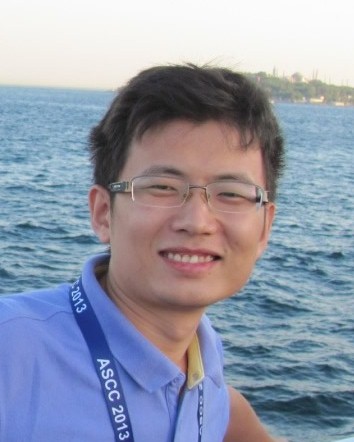}}]{Shiyu Zhao} received the B.Eng. and M.Eng. degrees from Beijing University of Aeronautics and Astronautics, Beijing, China, in 2006 and 2009, respectively. He got the Ph.D. degree in electrical engineering from the National University of Singapore in 2014.

He is a Postdoctoral Research Associate in the Department of Mechanical Engineering at the University of California, Riverside, CA, USA.
From 2014 to 2015, he was a Postdoctoral Research Associate in the Faculty of Aerospace Engineering at the Technion - Israel Institute of Technology, Haifa, Israel.
His research interests lie in distributed control and estimation over networked dynamical systems.
\end{IEEEbiography}

\begin{IEEEbiography}[{\includegraphics[width=1in,height=1.25in,clip,keepaspectratio]{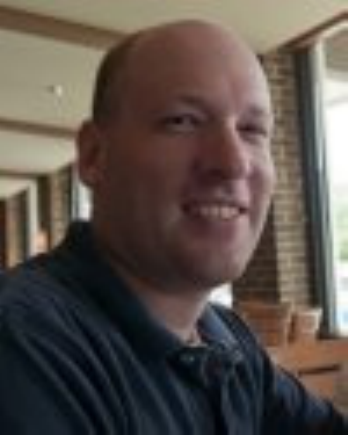}}]{Daniel Zelazo} received the B.Sc. and M.Eng. degrees in electrical engineering from the Massachusetts
Institute of Technology, Cambridge, MA, USA, in 1999 and 2001, respectively. He got the Ph.D. degree in aeronautics and astronautics from the University of Washington, Seattle, WA, USA, in 2009.

He is an Assistant Professor of aerospace engineering at the
Technion - Israel Institute of Technology, Haifa, Israel.
From 2010 to 2012, he was a Postdoctoral Research Associate and Lecturer at the Institute for Systems Theory \& Automatic Control, University of Stuttgart, Stuttgart, Germany.
His research interests include topics related to multi-agent systems, optimization, and graph
theory.
\end{IEEEbiography}

\end{document}